\documentclass[a4paper,fleqn]{cas-sc}

\usepackage[numbers,sort&compress]{natbib}
\usepackage{amsmath}
\usepackage{amssymb}
\usepackage{mathtools}
\usepackage{booktabs}
\usepackage{tabularx}
\usepackage{array}
\usepackage{microtype}
\usepackage{flafter}
\usepackage[section]{placeins}
\usepackage{tikz}
\usetikzlibrary{positioning,calc}
\usepackage[ruled,vlined,linesnumbered]{algorithm2e}

\newtheorem{theorem}{Theorem}
\newtheorem{lemma}[theorem]{Lemma}

\newtheorem{corollary}[theorem]{Corollary}
\newdefinition{definition}[theorem]{Definition}
\newdefinition{remark}[theorem]{Remark}
\newproof{proof}{Proof}

\newcommand{\R}{\mathbb{R}}
\newcommand{\Q}{\mathbb{Q}}
\newcommand{\Z}{\mathbb{Z}}
\newcommand{\norm}[1]{\lVert #1\rVert_p}

\tikzset{
 terminal/.style={circle,draw,fill=white,inner sep=0pt,minimum size=4.2pt},
 steiner/.style={circle,draw,fill=black,inner sep=0pt,minimum size=4.2pt},
 topoedge/.style={line width=0.6pt},
 zeroedge/.style={densely dashed,line width=0.7pt}
}

\begin{document}
\let\WriteBookmarks\relax
\def\floatpagepagefraction{1}
\def\textpagefraction{.001}

\shorttitle{Exact Algorithms for Minimum Steiner Point Trees}
\shortauthors{E. Woo and D. Shin}

\title[mode=title]{Exact Algorithms for Minimum Steiner Point Trees with Bounded Edge Length}

\author[1]{Eungyu Woo}[orcid=0009-0000-3621-5175]
\ead{ekwoo@dgist.ac.kr}

\author[1]{Donghoon Shin}[orcid=0009-0004-7088-3846]
\cormark[1]
\ead{dshin@dgist.ac.kr}

\affiliation[1]{organization={Department of EECS, DGIST}, country={South Korea}}

\cortext[1]{Corresponding author}

\begin{abstract}
Given distinct terminals $P\subset\R^2$ and $R>0$, the Steiner tree problem with minimum number of Steiner points and bounded edge length asks for a straight line tree spanning $P$, with every edge of length at most $R$, that minimizes the number of Steiner points. Length is measured in a fixed $L_p$ metric with $p\in\Q_{\ge 1}\cup\{\infty\}$. The optimum $k$ is not bounded by $n$, even in two-terminal case.

We give a deterministic exact algorithm that computes an optimal implicit representation in $n^{O(n)}$ time, independent of $k$, in the computation model of Section~\ref{subseccomputation}. The representation consists of a full Steiner topology, exact branch coordinates, and a segment count for each topology edge. Subdivision requires additional time $\Theta(n+k)$.

For each full Steiner topology, the feasible segment count vectors are the integer points of a convex projection in $O(n)$ dimensions. A continuous relaxation restricts the integer optimum to $2n-3$ consecutive values. Exact semialgebraic routines and a flatness recursion in integral lattice coordinates decide these values. Together with the parameterized bottleneck algorithm of Bandyapadhyay et al., this gives the value bound $\min\{n^{O(n)},\allowbreak k^{O(k)}n^{O(1)}\}$ for every fixed metric considered here.
\end{abstract}

\begin{highlights}
\item An exact implicit algorithm solves STP-MSPBEL in $n^{O(n)}$ time.
\item The bound is independent of the optimum number of Steiner points.
\item Each topology gives $O(n)$ integer feasibility tests in $O(n)$ dimensions.
\item Explicit expansion takes $\Theta(n+k)$ time.
\end{highlights}

\begin{keywords}
Steiner point \sep Bounded edge length \sep Exact algorithm \sep Full Steiner topology \sep Semialgebraic feasibility \sep Parameterized algorithm
\end{keywords}

\maketitle

\section{Introduction}
\label{secintro}

Lin and Xue introduced the Steiner tree problem with minimum number of Steiner points and bounded edge length (STP-MSPBEL), proved it NP-hard, and gave a minimum spanning tree heuristic with approximation factor five~\cite{LinXue1999}. Subsequent work improved approximation guarantees in Euclidean and normed planes~\cite{CalinescuWang2023,ChenEtAl2001TCS,ChengEtAl2008,CohenNutov2018,MandoiuZelikovsky2000,NutovYaroshevitch2009,ShinChoi2023}. Exact algorithms have been studied for bottleneck Steiner tree problems with a given Steiner point budget~\cite{BaeEtAl2011,BrazilEtAl2011Generalised}. Bandyapadhyay et al. gave a $k^{O(k)}n^{O(1)}$ algorithm for Euclidean Bottleneck Steiner Tree and state that it extends to the $\ell_p$ metric for $p\in\Q_{\ge 1}\cup\{\infty\}$~\cite{BandyapadhyayEtAl2024}.

The optimum number $k$ of Steiner points in STP-MSPBEL can be arbitrarily large even for two terminals, since long terminal distances force long chains of degree two subdivision points. We separate this subdivision from the branching structure. The algorithm computes a full Steiner topology, exact branch coordinates, and segment counts for each topology in $n^{O(n)}$. The count does not depend on $k$. Expanding the representation takes additional time $\Theta(n+k)$. Combining our algorithm with the solution of dual problem~\cite{BandyapadhyayEtAl2024} gives the value bound $\min\{n^{O(n)},\allowbreak k^{O(k)}n^{O(1)}\}$ for every fixed metric considered here.

The splitting transformation, full Steiner representation, beading construction, and bead count identity are due to Brazil, Ras, and Thomas~\cite{BrazilRasThomas2010}. Our contribution begins with the resulting full topology. For a fixed full topology, the feasible edge counts are the integer points of a convex projection in $O(n)$ dimensions. Its continuous optimum leaves only $2n-3$ candidate objective values. A compact lifted semialgebraic description gives exact optimization and point recovery for every body created by the recursion. We then adapt the fixed dimensional recursion of Lenstra and the width norm formulation of Dadush, Peikert, and Vempala~\cite{Lenstra1983,DadushPeikertVempala2011}. Lower dimensional recursive bodies are reduced by a thin rational ellipsoid.

Section~\ref{secprelim} gives the definitions and computation model. Section~\ref{secgeneral} proves the full topology formulation, exact algorithm, output bound, and bottleneck threshold relation. Appendix~\ref{appflatness} proves the integer feasibility theorem. Appendix~\ref{appdualebst} gives the complementary full topology bound for the dual bottleneck problem.

\section{Preliminaries}
\label{secprelim}

\subsection{Problem and terminology}
\label{subsecproblem}

Fix $p\in\Q_{\ge 1}\cup\{\infty\}$ and measure length in the $L_p$ norm. Constants hidden in asymptotic notation may depend on this fixed metric. For distinct terminals $P=\{p_1,\ldots,p_n\}\subset\R^2$ and $R\in\R_{>0}$, STP-MSPBEL asks for a straight line tree spanning $P$ with edge length at most $R$ and the minimum number $k$ of Steiner points. Vertices in $P$ are called \emph{terminals} and every other vertex is a \emph{Steiner point}.

\begin{definition}
\label{deffulltopology}
A \emph{full Steiner topology} on $n$ terminals is an unrooted abstract tree whose leaves are exactly the $n$ labeled terminals and the others are unlabeled and have degree three.
\end{definition}

This is the abstract topology of a full Steiner tree in the terminology of Brazil, Ras, and Thomas~\cite{BrazilRasThomas2010}. A full Steiner topology has $n-2$ unlabeled internal vertices and $2n-3$ edges. We write $\mathcal F_n$ for the set of all such topologies.

An \emph{implicit representation} is a triple $(H,x,q)$ formed by a full Steiner topology $H$, exact coordinates $x$ for its unlabeled vertices, and counts $q_e\in\Z_{\ge0}$. An edge with count zero has coincident endpoints and is contracted. Every positive edge is divided into $q_e$ equal segments. The degree two subdivision points are not stored.

Let $B(P)$ be axis-parallel bounding box of $P$, and let $\Delta_B$ be its diameter in the $L_p$ norm. For $y=(y_1,y_2)\in\R^2$ and $t\in\R$, let $\mathcal B_p(y,t)$ be an existential polynomial formula of constant size for $t\ge0$ and $\norm{y}\le t$. For $p=1$, we use
$$
 t\ge0,
 \qquad
 \sigma_1y_1+\sigma_2y_2\le t
 \quad\text{for all }(\sigma_1,\sigma_2)\in\{\pm1\}^2.
$$
For $p=\infty$, we use $t\ge0$ and $-t\le y_i\le t$ for $i=1,2$. For $p=a/b>1$ with coprime positive integers $a$ and $b$, one occurrence of $\mathcal B_p(y,t)$ introduces local variables $r_1,r_2,s\ge0$ and uses
$$
 t\ge0,
 \qquad
 s^b=t,
 \qquad
 r_i^{2b}=y_i^2~(i=1,2),
 \qquad
 r_1^a+r_2^a\le s^a.
$$
Thus each $\mathcal B_p$ occurrence contributes constant variables and degree depending only on the fixed metric.

\subsection{Computation model}
\label{subseccomputation}

The terminal coordinates and $R$ are represented elements of an initial ordered domain $\mathbb D_0$ in a real closed field $\mathbb K\subseteq\R$. Semialgebraic calls use arithmetic complexity over the current ordered coefficient domain $\mathbb D$, as in Basu, Pollack, and Roy~\cite[Chapter~8]{BasuPollackRoy2006}. Their cost is the number of arithmetic operations and sign tests in $\mathbb D$. Algebraic outputs are stored by real univariate representations. When such an output is used as a coefficient in a later call, its representation is adjoined to the coefficient domain and operations in the extension are charged by the same convention. Coefficient bit lengths and coefficient growth are not charged.

Represented scalar field operations, sign tests, floor, ceiling, extended-gcd on integers, and the fixed-degree algebraic operations required by Euclidean and fixed $L_p$ norms take unit cost. Vector and matrix computations are charged by their scalar operations. A primitive vector $a\in\Z^r$ is completed to a unimodular coordinate system satisfying $a^TW=e_r^T$ by $O(r)$ such transformations, stored as their product.

\subsection{Semialgebraic and integer feasibility tools}
\label{subsecpublishedtools}

We use exact emptiness testing, point recovery, and exact linear and quadratic optimization over compact convex sets. The same operations remain available, with polynomial overhead in the dimension, under affine sections, injective affine substitutions, and homotheties. Represented outer radius data are maintained under the same transformations.

\begin{theorem}
\label{thmsemialgopt}
Let $\widehat S\subseteq\mathbb K^N$ be a compact basic closed semialgebraic set defined by $s$ polynomial equations and nonstrict inequalities with coefficients in $\mathbb D$ and degree at most $\delta$. Let $S$ be its projection onto prescribed coordinates. Emptiness of $S$ can be decided in $(s\delta)^{O(N)}$ operations in $\mathbb D$. When $S$ is nonempty, the same bound returns a point of $S$ and a lift in $\widehat S$. If $f$ is a polynomial of degree at most $\delta$ in the projected coordinates, its minimum on $S$ and one lifted minimizer are computable within the same bound.
\end{theorem}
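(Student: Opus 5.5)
This theorem is essentially a packaging of known results from Basu, Pollack, and Roy, so the plan is to reduce each of the three claims to a quantifier-elimination or sampling primitive over $\mathbb D$ and check the complexity bound.

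\emph{Emptiness and point recovery.} We have $S\neq\emptyset$ iff $\widehat S\neq\emptyset$, since $S$ is a coordinate projection of $\widehat S$. We therefore decide emptiness of $\widehat S$ directly. We apply the algorithm for sampling points meeting every semi-algebraically connected component of a basic closed set \cite[Chapter~13]{BasuPollackRoy2006}, or equivalently the existential theory decision procedure. It runs in $s^{N+1}\delta^{O(N)}=(s\delta)^{O(N)}$ operations in $\mathbb D$. Its output is a finite set of sample points given by real univariate representations. If the set meets $\widehat S$, we select one sample point, which is a lift in $\widehat S$. Its image under the coordinate projection is a point of $S$. Projecting only drops coordinates of the representation, so this costs nothing extra.

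\emph{Optimization.} We introduce a fresh variable $z$ and the set
\[
\widehat T=\{(w,z): w\in\widehat S,\ z=f(\pi(w))\}\subseteq\mathbb K^{N+1},
\]
where $\pi$ is the projection onto the prescribed coordinates. $\widehat T$ is basic closed and compact, with $s+1$ constraints of degree at most $\delta$. Minimizing $f$ on $S$ is the same as minimizing the coordinate $z$ on $\widehat T$. Compactness guarantees that the minimum is attained. Since $z$ is a linear function, we can use the BPR algorithm for the minimum of a coordinate on a basic set. Alternatively, we eliminate all variables except $z$ via a one-block quantifier elimination in $(s\delta)^{O(N)}$ operations. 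This yields a univariate semialgebraic description of the image, which is a finite union of points and intervals with algebraic endpoints. Its least element $z^\ast$ is returned by a Thom encoding or a univariate representation. We then adjoin the constraint $z=z^\ast$ and rerun the sampling algorithm over the extended coefficient domain, as permitted by the computation model. This produces a lifted minimizer. The extension changes the count only by constant factors in the exponent, since $z^\ast$ has degree $(s\delta)^{O(N)}$ but enters only once as a coefficient.

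\emph{Main obstacle.} The main difficulty is this last step: keeping the bound at $(s\delta)^{O(N)}$ after adjoining $z^\ast$, rather than paying a degree blowup in the second call. To avoid this, we would use BPR's combined ``minimum of a polynomial with a point realizing it'' procedure (their Algorithm 14.x on global optimization), which returns $z^\ast$ and a minimizer in one pass within $s^{N+1}\delta^{O(N)}$. Under the paper's convention that coefficient growth is not charged, arithmetic in the adjoined representation is also unit-cost.
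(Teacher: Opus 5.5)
Your proposal is correct and follows essentially the same route as the paper. You decide emptiness and recover a lift by BPR sampling on $\widehat S$ and then project; for optimization you pass to the compact graph of $f\circ\pi$ with one extra variable and one extra equation, obtain the least value of that coordinate by one-block elimination and univariate root isolation, and sample in the fiber over $z^\ast$ for a lifted minimizer. Your worry about adjoining $z^\ast$ is covered by the paper's computation-model convention for adjoined algebraic coefficients, and as you observe, charging each extension operation $(s\delta)^{O(N)}$ operations in $\mathbb D$ still keeps the total at $(s\delta)^{O(N)}$.
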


\begin{proof}
Apply the critical point sampling and one-block elimination algorithms of Basu, Pollack, and Roy~\cite[Theorems~13.22 and~14.16]{BasuPollackRoy2006} to $\widehat S$. They decide emptiness and return a point of $\widehat S$ in $(s\delta)^{O(N)}$ operations. Projection gives the stated point of $S$ and its lift. For optimization, write $\pi$ for the prescribed projection and apply the same algorithms to the compact graph
$$
 \{(u,t)\in\widehat S\times\mathbb K\mid t=f(\pi(u))\}.
$$
One-block elimination and univariate root isolation give the least projected value of $t$, and critical point sampling in that fiber gives a lifted minimizer. Adding one variable and one equation preserves the stated bound.
\end{proof}

\begin{theorem}[Theorem~1.6 in~\cite{DadushVempala2012}]
\label{thmgeneralnormsvp}
Given a basis for a lattice $L$ and a well-centered norm $\lVert\cdot\rVert_K$ specified by a convex body $K$, both in $\R^r$, the shortest vector in $L$ under $\lVert\cdot\rVert_K$ can be found deterministically in $2^{O(r)}$ time and space.
\end{theorem}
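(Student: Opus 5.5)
The statement is quoted from Dadush and Vempala, so the plan is not to reprove it but to recall why it holds. The real work is checking that its hypotheses can be met in the computation model of Section~\ref{subseccomputation}. The intended proof is therefore a citation plus an interface argument, and I would organize it in three steps.

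First, I would recall the structure of the algorithm of~\cite{DadushVempala2012}.

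\begin{enumerate}
\item A well-centered convex body $K\subseteq\R^r$ is given by a membership oracle together with radii $0<a\le A$ satisfying $a B_2^r\subseteq K\subseteq A B_2^r$. Here $B_2^r$ is the Euclidean unit ball.
\item One constructs deterministically, in $2^{O(r)}$ oracle calls and arithmetic operations, an M-ellipsoid $E$ for the symmetrized body $K\cap(-K)$. Its defining property is that $K$ is covered by $2^{O(r)}$ translates of $E$, and $E$ by $2^{O(r)}$ translates of $K\cap(-K)$.
\item For a guessed scale $t$, the points of $L\cap tK$ are enumerated. One covers $tK$ by $2^{O(r)}$ translates of $tE$ and enumerates the lattice points in each translate with the Micciancio--Voulgaris Voronoi-cell algorithm in the Euclidean norm induced by $E$.
\item The scale $t$ is located by doubling, starting from a Euclidean lower bound. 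At the first scale where $L\cap tK$ contains a nonzero vector, a packing argument with the M-ellipsoid covering numbers bounds the enumerated set by $2^{O(r)}$ points. Among these one returns a vector minimizing $\lVert\cdot\rVert_K$.
\end{enumerate}

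Second, I would verify that each primitive above is available for the bodies to which the theorem is later applied.

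\begin{itemize}
\item \emph{Membership.} Membership in the width body and its polar can be realized exactly by the semialgebraic routines of Theorem~\ref{thmsemialgopt}.
\item \emph{Norm evaluation.} Evaluating $\lVert v\rVert_K$ is a linear optimization over a compact convex set, which Theorem~\ref{thmsemialgopt} also provides.
\item \emph{Centering radii.} The radii $a,A$ come from the represented outer radius data maintained under affine sections, substitutions, and homotheties, as stated in Section~\ref{subsecpublishedtools}.
\end{itemize}

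Since the model charges only arithmetic operations and sign tests, and not bit growth, the deterministic $2^{O(r)}$ bound transfers. The only change is a multiplicative factor equal to the cost of one oracle call.

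Third, and this is where I expect the main obstacle, I would check that the deterministic M-ellipsoid construction of~\cite{DadushVempala2012} uses only exact oracle calls and comparisons, not approximate volume or sampling. This matters because in our model the oracle is exact but expensive. I would also check that the well-centered condition, a ball around the origin inside $K$, actually holds for the recursive bodies. Lower-dimensional bodies may fail to be full-dimensional, and this is presumably why they are first reduced by a thin rational ellipsoid. I would first try to confirm that the thin-ellipsoid preprocessing always yields an inner radius $a>0$ that is represented exactly, so that the cited theorem applies verbatim.
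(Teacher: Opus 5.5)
Your route is essentially the paper's. The statement is imported verbatim from Dadush--Vempala, and the only supporting text in the paper is the remark after it: central symmetry gives well-centeredness, a unit-ball query costs two exact linear optimizations, and Appendix~\ref{appflatness} supplies Euclidean inner and outer bounds whose ratio depends only on $r$. You call the interface check the real work, but that check looks for the radii in the wrong place and misses why their ratio matters.

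\textbf{Where the radii come from.} Outer radius data for $C$ only gives an inner ball for the polar norm body. An outer ball for the polar needs an inner ball for $C$. The thin-ellipsoid step does not supply one. Lemma~\ref{lemthindimensionreduction} only certifies full dimensionality or produces a lattice hyperplane. The theorem is invoked there for the lattice $B^T\Z^r$ under the plain Euclidean norm, which is trivially well-centered.

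The substantive invocation is in Lemma~\ref{lemminimumwidth}, for $\lVert\cdot\rVert_{D^*}$ with $D=S(C)-S(C)$. Both radii come from the two-sided bound $c+\frac{1}{4r^r}B_2^r\subseteq S(C)\subseteq c+3\sqrt r\,B_2^r$ of Lemma~\ref{lemrationalnormalization}. This yields $(6\sqrt r)^{-1}B_2^r\subseteq D^*\subseteq 2r^rB_2^r$. Since $D^*$ is centrally symmetric, no separate check of a ball around the origin is needed.

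\textbf{Why the ratio matters.} You claim the bound transfers with only a factor equal to one oracle call. That is too quick. Oracle-based algorithms of this kind typically depend polynomially on $\log(A/a)$; your own doubling search in step 4 can need that many rounds. The normalization keeps $\log(A/a)=O(r\log r)$, which is what makes the transfer legitimate.

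\textbf{Your third-step worry.} The concern about exact versus approximate computation is not a real obstacle. Exact membership queries are at least as strong as what the cited algorithm assumes. Any approximations it makes internally are deterministic computations built from those queries and arithmetic.
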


The cited bound counts membership queries and arithmetic operations. A centrally symmetric convex body is well-centered. In our application, a unit ball query uses two exact linear optimizations, and Appendix~\ref{appflatness} supplies a Euclidean inner and outer bound whose ratio depends only on the dimension.

The next theorem gives the convex integer feasibility result used below. Its recursion follows Lenstra~\cite{Lenstra1983}. The width norm and wide body branch follow the proof of Theorem~4.7 in~\cite{DadushPeikertVempala2011}, while the exact shortest vector step uses Theorem~\ref{thmgeneralnormsvp}. The thin rational reduction and the operation count in the present model are proved in Appendix~\ref{appflatness}. For $r\ge1$, $x\in\R^r$, and $\rho\ge0$, $B_2(x,\rho)=\{y\in\R^r\mid \|y-x\|_2\le\rho\}$, $B_2^r=B_2(0,1).$

\begin{theorem}
\label{thmconvexintegerfeas}
Let $K\subseteq\R^r$ be compact and convex with represented outer radius data
$$
 K\subseteq a_0+R_{\rm out}B_2^r.
$$
Let $A$ bound the operations for $K$ and for every body obtained recursively by affine sections, injective affine substitutions, and homotheties. A deterministic algorithm decides whether $K\cap\Z^r$ is empty and returns a point when one exists in $r^{O(r)}A$. In particular, the time is $n^{O(n)}$ when $r=O(n)$ and $A=n^{O(n)}$.
\end{theorem}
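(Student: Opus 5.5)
The proof follows Lenstra's recursion on the dimension $r$, with the flatness step carried out in the width-norm form of Dadush, Peikert, and Vempala. At each level there is a current body $K'\subseteq\R^{r'}$, which is obtained from $K$ by the permitted operations, together with its outer radius data. The first task is to decide whether $K'$ is \emph{wide}, meaning it certainly contains a lattice point, or \emph{thin}, meaning some primitive direction $a\in\Z^{r'}$ has width $w_a(K')=\max_{x\in K'}a^Tx-\min_{x\in K'}a^Tx$ at most a flatness bound $f(r')=(r')^{O(1)}$. In the thin case we branch on the $O(f(r'))$ integer values $\gamma$ with $a^Tx=\gamma$. By Theorem~\ref{thmsemialgopt}, each width evaluation costs two exact linear optimizations, so one call costs $O(A)$.

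\textbf{Step 1: rounding and full-dimensionality.} The width-norm SVP needs a well-centered, full-dimensional unit ball with bounded Euclidean sandwiching. The outer ball $a_0+R_{\rm out}B_2^{r'}$ is given. For an inner ball, I would build a John-type ellipsoid approximation deterministically. Exact linear optimization over $K'$ in $O(r')$ directions produces a simplex of near-maximal volume, following the standard greedy determinant-maximization procedure, and this gives $E\subseteq K'-c\subseteq r'^{O(1)}E$. If the construction shows that $K'$ lies in a thin rational ellipsoid, meaning some direction has width below a threshold, because $K'$ is lower-dimensional or nearly so, then I would take a rational direction $a$ of small width. This is where the ``thin rational ellipsoid'' reduction enters. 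The direction is rounded to a primitive integer vector via simultaneous Diophantine approximation computed with extended gcd and floor. This step yields an integer direction of small width, or certifies that the affine hull constrains the integer points to a single hyperplane $a^Tx=\gamma$, or leaves none at all.

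\textbf{Step 2: width norm and the branching.} In the full-dimensional, well-rounded case, the difference body $K'-K'$ is centrally symmetric. Its polar defines the width norm $\|a\|^*=w_a(K')$, and one membership query to its unit ball costs two linear optimizations. Theorem~\ref{thmgeneralnormsvp} then finds the shortest nonzero $a\in\Z^{r'}$ in $2^{O(r')}$ queries. If $w_a(K')$ exceeds $f(r')$, flatness (Khinchine, in the form of Banaszczyk's theorem) guarantees a lattice point. To recover one explicitly, I would follow the wide-body branch of the Dadush--Peikert--Vempala proof. Shrink $K'$ homothetically about a center until it contains a translate of a fundamental parallelepiped of an LLL/HKZ-reduced basis relative to the rounding, then round the center. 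The resulting point is checked via Theorem~\ref{thmsemialgopt}. Otherwise there are at most $f(r')+1$ values $\gamma$. For each, complete $a$ to a unimodular $W$ with $a^TW=e_{r'}^T$ using the stated unit-cost operations. The substitution $x=Wy$ with $y_{r'}=\gamma$ then gives an injective affine section of dimension $r'-1$, whose outer radius data are transformed along with it.

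\textbf{Step 3: counting.} Let $T(r)$ be the number of recursive calls. The recurrence is $T(r)\le r^{O(1)}\,T(r-1)$, which solves to $T(r)=r^{O(r)}$. Each call spends $2^{O(r)}\cdot r^{O(1)}\cdot A$ operations, so the total is $r^{O(r)}A$. The statement for $r=O(n)$ and $A=n^{O(n)}$ then follows immediately.

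I expect Step 1 to be the main obstacle. Degenerate or nearly lower-dimensional recursive bodies break the well-centered sandwiching that Theorem~\ref{thmgeneralnormsvp} needs, and the thin direction has to be made exactly rational and primitive without losing more than an $r^{O(1)}$ factor in width. I would first try to read off the thin direction from the smallest axis of the simplex-based ellipsoid, then round it via a lattice-reduction step inside the current unimodular coordinates.
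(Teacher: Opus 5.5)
Your overall architecture agrees with the paper: Lenstra recursion, the width norm as the polar of $K'-K'$, exact SVP via Theorem~\ref{thmgeneralnormsvp}, branching on $r^{O(1)}$ levels, and the recurrence $T(r)\le r^{O(1)}T(r-1)$. The wide case, however, has a genuine gap.

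The flatness theorem is purely existential, and your recovery step does not work as stated. You propose to ``shrink $K'$ homothetically about a center until it contains a translate of a fundamental parallelepiped, then round the center.'' But a homothetic copy $(1-\alpha)c+\alpha K'$ lies inside $K'$, so shrinking can never make it contain a parallelepiped that $K'$ does not already contain.

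If you instead mean Lenstra--Kannan rounding of the center in a reduced basis, you have not linked ``minimum width $>f(r')$'' to the parallelepiped fitting inside $K'$. With LLL the loss is exponential in $r$ in the worst case. That forces an exponential threshold, $2^{O(r)}$ branches per level, and $2^{O(r^2)}$ time overall. With HKZ you would still need a threshold enlarged by your sandwiching ratio, together with an explicit transference or nearest-plane bound. The existential flatness constant alone does not guarantee that the rounded point lies in $K'$. Checking that point with Theorem~\ref{thmsemialgopt} gives no fallback when the check fails.

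The missing idea is the paper's reduction of the wide case to the narrow one. Let $x^-$ minimize $a^Tx$ and set $\alpha=(\phi(r)+1)/w$. Then $C_0=(1-\alpha)x^-+\alpha C\subseteq C$ has $w_{\Z}(C_0)=\alpha\,w_{\Z}(C)=\phi(r)+1>\phi(r)$. Flatness, used only existentially, therefore places a lattice point in $\operatorname{int}C_0$. At the same time $C_0$ meets at most $\phi(r)+2$ integer levels of $a^Tx$, and recursing on those sections finds the point. This is why homotheties appear in the hypothesis on $A$.

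Your Step~1 is also only a sketch at the point where it matters. Exactly lower-dimensional bodies arise as sections during the recursion. For such a body the polar of $K'-K'$ is unbounded, so Theorem~\ref{thmgeneralnormsvp} does not apply. If $\operatorname{aff}K'$ is irrational, there is also no integer normal vector. Extended gcd and floor give only one-dimensional continued-fraction approximations. What you need is a nonzero integer $a$ nearly orthogonal to all of $\operatorname{lin}(K'-K')$, with error small relative to the diameter, so that the width is below $1$.

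The paper obtains such an $a$ in Lemma~\ref{lemthindimensionreduction}. It encloses $C$ in a thin rational ellipsoid $x_0+BB_2^r$ with $|\det B|<\kappa_r/8^r$, applies Minkowski's first theorem, and computes $a$ by exact SVP in the norm $4\|B^T\cdot\|_2$. This gives $u-\ell\le 1/2$, hence at most one integer level. In the full-dimensional case, Lemma~\ref{lemrationalnormalization} shows that an $r^{O(r)}$ sandwiching ratio already suffices for the SVP step, so your $r^{O(1)}$ John-type rounding is more than the paper needs.
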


\section{STP-MSPBEL}
\label{secgeneral}

We first represent every optimum on a full Steiner topology, then optimize and enumerate these topologies.

\subsection{Full Steiner representation}
\label{subsecfullrepresentation}

The pruning, straightening, and splitting reductions below are the $R$-scaled form of the full tree and beading framework in~\cite[Section~2]{BrazilRasThomas2010}. We include the short proofs because zero length edges are used later.

\begin{lemma}
\label{lemnoleaf}
An optimal tree has no Steiner leaf.
\end{lemma}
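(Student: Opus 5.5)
The argument is by contradiction via deletion. Suppose $T$ is an optimal tree, that is, a feasible straight line tree spanning $P$ with all edges of length at most $R$ and the minimum number $k$ of Steiner points, and suppose $T$ has a Steiner leaf $s$. I will delete $s$ and show that what remains is still feasible but has fewer Steiner points.

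\emph{Degenerate case.} First I rule out that $T$ consists of $s$ alone or that $s$ has no neighbor. A spanning tree of $P$ contains all terminals. So if $n\ge1$, the tree has at least one terminal besides $s$. Because $T$ is connected, the leaf $s$ has exactly one incident edge $sv$.

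\emph{Main step.} Remove $s$ and the edge $sv$, and call the result $T'$.
\begin{itemize}
\item $T'$ is still a tree: deleting a leaf from a tree leaves a tree.
\item $T'$ still spans $P$, because $s\notin P$.
\item Every remaining edge is an edge of $T$, so every edge still has length at most $R$.
\item $T'$ is still a straight line tree, since its edges are a subset of the straight edges of $T$.
\end{itemize}
Hence $T'$ is feasible. It has exactly $k-1$ Steiner points, which contradicts the minimality of $k$.

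\emph{Possible obstacle.} The only subtle point is the convention for coincident points and zero length edges, which the paper uses later. If $s$ coincides geometrically with another vertex, it is still a distinct vertex of the abstract tree, and deleting it still lowers the Steiner point count by one. I will state that Steiner points are counted as vertices of the tree, so the argument does not depend on geometric distinctness.

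In the boundary case $n=1$, the optimum is the single terminal with $k=0$. Then any tree containing a Steiner point is not optimal, and again there is no Steiner leaf.
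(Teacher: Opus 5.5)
Your proposal is correct and matches the paper's proof: delete the Steiner leaf together with its unique incident edge, note that the result is still a feasible straight line tree spanning $P$, and conclude that it has one fewer Steiner point, contradicting optimality. Your extra remarks on the case $n=1$ and on coincident vertices are harmless; the paper does not need them.
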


\begin{proof}
A Steiner leaf is a Steiner point of degree one. Deleting it together with its incident edge preserves terminal connectivity and decreases the number of Steiner points.
\end{proof}

\begin{lemma}
\label{lemstraight}
A path with $q$ degree two Steiner vertices can be replaced by $q$ equally spaced points on the segment between its endpoints.
\end{lemma}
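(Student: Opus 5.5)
We have a path $u=v_0,v_1,\ldots,v_q,v_{q+1}=w$ in a feasible tree $T$. The interior vertices $v_1,\ldots,v_q$ are Steiner points of degree two, and every edge has length at most $R$. The plan is to replace these interior points by $v'_i=u+\tfrac{i}{q+1}(w-u)$ for $i=1,\ldots,q$. We then check that the new tree is still a feasible straight line tree spanning $P$ with the same number of Steiner points. The count is immediate: we delete $q$ Steiner points and add $q$ Steiner points, and since the $v_i$ had degree two, no other edge of $T$ is incident to them. Connectivity is also clear. The tree structure is unchanged, and the path $u,v'_1,\ldots,v'_q,w$ still joins the two ends.

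The length bound comes from the triangle inequality for the $L_p$ norm, which holds because $p\ge1$. We have
$$
\norm{w-u}\le\sum_{i=0}^{q}\norm{v_{i+1}-v_i}\le (q+1)R,
$$
so each new segment has length $\norm{w-u}/(q+1)\le R$. Homogeneity of the norm gives the exact segment length. The case $u=w$ cannot occur with $q\ge1$ in a tree, and even if it did, the new edges would simply have length zero. The case $q=0$ is trivial.

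The one delicate point is what ``straight line tree'' permits. If the model forbids overlapping or crossing edges, or coincident points, moving the path onto the chord $uw$ could create overlaps with other parts of the tree. I expect that the problem definition, as in Brazil, Ras and Thomas, allows edges to cross and vertices to coincide, since the paper explicitly uses zero length edges. In that case nothing more is needed. If a geometric embedding were required, I would instead argue that any coincidences can be removed by contracting or perturbing without increasing the count, but I would first check the definition and expect this to be unnecessary. The lemma is then stated for a fixed tree, and the replacement leaves the rest of the tree unchanged.
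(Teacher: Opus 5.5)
Your proof is correct and matches the paper's argument: the triangle inequality bounds the endpoint distance by $(q+1)R$, so equal subdivision into $q+1$ segments keeps every edge within $R$. The extra checks on the Steiner point count, connectivity, and possible coincidences are harmless. The paper handles coincidences separately, in the proof of Lemma~\ref{lembeadcount}, by identifying coincident vertices and deleting cycle edges.
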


\begin{proof}
The triangle inequality bounds the distance between the endpoints by $(q+1)R$. Equal subdivision preserves the edge bound.
\end{proof}

Suppress every degree two Steiner point of an optimal tree while retaining all terminals. By Lemma~\ref{lemnoleaf}, the resulting \emph{reduced topology} has no unlabeled vertex of degree one or two. The splitting transformation replaces internal terminals and vertices of degree greater than three by cubic trees embedded at the original position, with zero length internal edges~\cite[Section~2]{BrazilRasThomas2010}.

\begin{lemma}
\label{lemsplitting}
Every reduced topology has a full Steiner representation obtained by splitting vertices and adding zero length edges. The representation has $n-2$ auxiliary vertices.
\end{lemma}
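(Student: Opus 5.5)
We construct the representation explicitly and then count its vertices. Start from a reduced topology $T$. Its leaves are terminals, by Lemma~\ref{lemnoleaf}. Every unlabeled vertex has degree at least three. Terminals may be internal vertices of any degree $d\ge 2$.

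\textbf{Splitting terminals.} For each terminal $v$ of degree $d\ge2$, create a new unlabeled vertex $v'$ at the same position as $v$. Attach $v$ as a leaf of $v'$ by a zero length edge, and reattach all $d$ original edges of $v$ to $v'$. Now $v$ is a leaf and $v'$ is an unlabeled vertex of degree $d+1\ge3$.

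\textbf{Splitting high-degree vertices.} Take an unlabeled vertex $w$ of degree $m>3$ with incident edges $e_1,\ldots,e_m$. Replace $w$ by a caterpillar $w_1,\ldots,w_{m-2}$, all at the position of $w$ and joined by zero length edges. Attach $e_1,e_2$ to $w_1$, attach $e_{i+1}$ to $w_i$ for $1<i<m-2$, and attach $e_{m-1},e_m$ to $w_{m-2}$. Each $w_i$ then has degree three.

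\textbf{Validity.} The result is still a tree: each operation either subdivides by inserting a new vertex or replaces a star by a tree. All new edges have length zero, so the embedding is unchanged and the edge bound $R$ still holds. Every positive edge keeps its endpoints' positions, so its segment count is unchanged. The leaves are now exactly the $n$ terminals, and every other vertex is unlabeled of degree three. By Definition~\ref{deffulltopology} this is a full Steiner topology.

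\textbf{Counting auxiliary vertices.} Let $I$ be the number of unlabeled vertices, all of degree three, and let $E$ be the number of edges. The tree identity $E=n+I-1$ combined with the handshake identity $2E=n+3I$ gives $I=n-2$ and $E=2n-3$. Here "auxiliary vertices" means the unlabeled vertices of the full topology. These are the original branch Steiner points together with the ones created by splitting, so there are exactly $n-2$ of them.

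The main subtlety is this interpretation of the count: the number of vertices \emph{added} depends on $T$, but the total number of unlabeled vertices is always $n-2$. The degree-two subdivision points are not counted here, since they are encoded in the edge counts $q_e$ via Lemma~\ref{lemstraight}.

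There is also a degenerate case, $n=2$, where there are no unlabeled vertices. Here the single topology edge is the straightened terminal-to-terminal path. With $n\ge2$ distinct terminals, no other degenerate case arises.
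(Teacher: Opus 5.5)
Your proof is correct and uses the paper's construction: each terminal of degree $d\ge2$ and each unlabeled vertex of degree greater than three is replaced by a cubic tree at the original position, joined by zero length edges. The only difference is the count: you apply the tree and handshake identities to the resulting full topology to get $n-2$ unlabeled vertices, while the paper sums the per-vertex contributions $\deg(p_i)-1$ and $\deg(v)-2$ over the reduced topology. Both are the same degree-sum fact, and your reading of ``auxiliary vertices'' as all unlabeled vertices matches the paper's count.
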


\begin{proof}
Replace a terminal of degree $d$ by a cubic tree with one labeled terminal leaf and $d$ attachment edges for the original incidences. This contributes $d-1$ auxiliary vertices. Replace a nonterminal vertex of degree $d\ge3$ by a cubic tree with $d$ attachment edges, which contributes $d-2$ auxiliary vertices. Place all auxiliary vertices at the coordinate of the vertex being split. Its internal edges, including the terminal leaf edge in the first construction, have length zero.

Let $B$ be the set of nonterminal vertices of the reduced topology. The number of auxiliary vertices is
$$
 \sum_{p_i\in P}(\deg(p_i)-1)+\sum_{v\in B}(\deg(v)-2).
$$
For a tree,
$$
 \sum_{p_i\in P}(\deg(p_i)-2)+\sum_{v\in B}(\deg(v)-2)=-2.
$$
Adding $n$ gives $n-2$.
\end{proof}

\begin{remark}
\label{remdegreetwoterminal}
Splitting preserves the given position of a degree two terminal. If $p_2$ is incident to $p_1p_2$ and $p_2s$, the split introduces an auxiliary vertex $s_2$ at $p_2$ and zero-length edge $p_2s_2$. Contracting this edge restores the internal terminal.
\end{remark}

\begin{figure}[pos=htbp]
\centering
\begin{tikzpicture}[x=1cm,y=1cm,font=\footnotesize]
 \node[steiner,label=below:$v$] (va) at (0,0) {};
 \node[terminal,label=left:$p_1$] (a1) at (-1.15,0.72) {};
 \node[terminal,label=left:$p_2$] (a2) at (-1.15,-0.72) {};
 \node[terminal,label=right:$p_3$] (a3) at (1.15,0.72) {};
 \node[terminal,label=right:$p_4$] (a4) at (1.15,-0.72) {};
 \path[topoedge] (va) edge (a1) edge (a2) edge (a3) edge (a4);
 \node at (0,-1.28) {(a) reduced junction};

 \node[steiner,label=below:$s_1$] (s1) at (3.55,0) {};
 \node[steiner,label=below:$s_2$] (s2) at (4.45,0) {};
 \node[terminal,label=left:$p_1$] (b1) at (2.55,0.72) {};
 \node[terminal,label=left:$p_2$] (b2) at (2.55,-0.72) {};
 \node[terminal,label=right:$p_3$] (b3) at (5.45,0.72) {};
 \node[terminal,label=right:$p_4$] (b4) at (5.45,-0.72) {};
 \path[topoedge] (s1) edge (b1) edge (b2);
 \path[topoedge] (s2) edge (b3) edge (b4);
 \path[zeroedge] (s1) edge node[above]{$q_e=0$} (s2);
 \node at (4,-1.28) {(b) full representation};

 \node[terminal,label=left:$u$] (u) at (7.15,0) {};
 \node[terminal,label=right:$v$] (w) at (10.15,0) {};
 \path[topoedge] (u) edge (w);
 \foreach \x in {7.90,8.65,9.40}{\node[steiner] at (\x,0) {};}
 \node at (8.65,0.48) {$q_e=4$};
 \node at (8.65,-1.28) {(c) edge expansion};
\end{tikzpicture}
\caption{The two operations behind a full Steiner representation. A degree four junction is split into degree three vertices joined by a zero length edge. Beading divides an edge into $q_e$ bounded length segments.}
\label{figfullrepresentation}
\end{figure}
\FloatBarrier

\begin{lemma}
\label{lemfullcount}
The number of full Steiner topologies on $n\ge3$ labeled terminals is $F_n=(2n-5)!!=(2n-4)!/(2^{n-2}(n-2)!)$.
\end{lemma}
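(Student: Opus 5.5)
The plan is to argue by induction on $n$ through the classical edge-insertion bijection. Since this count is standard (the number of unrooted binary trees with $n$ labeled leaves), the proof only needs to show that $F_{n+1}=(2n-3)F_n$ and that $F_3=1$. The base case holds because a full Steiner topology on three terminals has exactly one unlabeled vertex of degree three, joined to $p_1,p_2,p_3$. This is the unique such tree, so $F_3=1=1!!$.

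For the inductive step, define a map $\Phi$ from pairs $(H,e)$ to $\mathcal F_{n+1}$. Here $H\in\mathcal F_n$ and $e$ is one of the $2n-3$ edges of $H$, as counted after Definition~\ref{deffulltopology}. The map subdivides $e$ by a new unlabeled vertex $w$ and attaches the new leaf $p_{n+1}$ to $w$. All unlabeled vertices keep degree three, $w$ gets degree three, and the leaves are exactly $p_1,\dots,p_{n+1}$. So $\Phi(H,e)\in\mathcal F_{n+1}$.

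The inverse goes as follows. Take $H'\in\mathcal F_{n+1}$ with $n+1\ge4$. The leaf $p_{n+1}$ has a unique neighbour $w$, which must be unlabeled. If $w$ were a terminal, the tree would be a single edge, which forces $n+1=2$. The vertex $w$ has degree three, so it has two further neighbours $a,b$. Deleting $p_{n+1}$ and suppressing $w$ gives a tree $H$ with the new edge $ab$. In $H$ every unlabeled vertex still has degree three and the leaves are $p_1,\dots,p_n$, so $H\in\mathcal F_n$ with a marked edge $ab$. These two constructions are mutually inverse. Because the terminals are labeled, the marked edge is recovered unambiguously, so there is no overcounting from automorphisms. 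Hence $F_{n+1}=(2n-3)F_n$, and therefore $F_n=\prod_{j=3}^{n-1}(2j-3)=(2n-5)!!$.

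The closed form follows from the identity $(2m-1)!!=(2m)!/(2^m m!)$ with $m=n-2$. This identity holds because $(2m)!$ splits into its odd factors, whose product is $(2m-1)!!$, and its even factors, whose product is $2^m m!$.

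The only delicate point is checking that $\Phi^{-1}$ is well defined. This needs two facts. First, the neighbour of the new leaf is never a terminal when $n\ge3$. Second, suppressing $w$ does not create a multi-edge. It cannot, because $H'$ is a tree, so $a$ and $b$ are not already adjacent.
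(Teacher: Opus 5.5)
Your proof is correct and matches the paper's argument: the paper also gets $F_n=(2n-5)F_{n-1}$ with $F_3=1$ by observing that deleting a prescribed terminal and suppressing its degree-two neighbour is inverse to subdividing one of the $2n-5$ edges of a topology on $n-1$ terminals and attaching that terminal. You additionally spell out the well-definedness checks and the double-factorial identity, which the paper leaves implicit.
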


\begin{proof}
There is one full Steiner topology for $n=3$. Removing a prescribed terminal and suppressing its degree two neighbor is inverse to subdividing one of the $2n-5$ edges of a topology on $n-1$ terminals and attaching the prescribed terminal. Hence $F_n=(2n-5)F_{n-1}$ with $F_3=1$.
\end{proof}

\subsection{Bead formulation}
\label{subsecbeads}

Fix a full Steiner topology $H=(V,E)$, and let $S(H)$ be its set of unlabeled degree three vertices. Each $s\in S(H)$ has a branch coordinate $x_s\in\R^2$. For $v\in V$, write
$$
X_v(x)=
\begin{cases}
 p_i, & v\text{ is the terminal leaf labeled }i,\\
 x_v, & v\in S(H).
\end{cases}
$$
For an edge $e=uv$, define its segment count at $x$ by
\begin{equation}
\label{eqbeadnumber}
 q_e(x)=
 \left\lceil
 \frac{\norm{X_u(x)-X_v(x)}}{R}
 \right\rceil.
\end{equation}
Thus $q_e(x)=0$ exactly when $e$ has zero length. Otherwise, beading $e$ produces $q_e(x)$ equal segments and $q_e(x)-1$ degree two Steiner points. Brazil, Ras, and Thomas define beading and give the formula $1-n+\sum_i\lceil |e_i|\rceil$ after converting the tree into full form~\cite[Section~2]{BrazilRasThomas2010}. For edge length bound $R$, the objective for a fixed topology is
\begin{equation}
\label{eqbranchobjective}
 \mu_H=
 \min_{x\in B(P)^{S(H)}}
 \sum_{e=uv\in E(H)}
 \left\lceil
 \frac{\norm{X_u(x)-X_v(x)}}{R}
 \right\rceil.
\end{equation}
Coordinatewise projection onto $B(P)$ fixes every terminal and is nonexpansive in the $L_p$ norm, so the restriction to $B(P)^{S(H)}$ is without loss of generality. The minimum in \eqref{eqbranchobjective} is attained because the objective is lower semicontinuous on this compact domain.

\begin{lemma}
\label{lembeadcount}
Let $k^*$ be the optimum number of Steiner points in the STP-MSPBEL instance. Then $k^*=\min_{H\in\mathcal F_n}(\mu_H-n+1)$, where $\mathcal F_n$ is the set of full Steiner topologies on the $n$ labeled terminals.
\end{lemma}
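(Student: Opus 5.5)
We prove the two inequalities $k^*\ge \min_H(\mu_H-n+1)$ and $k^*\le \min_H(\mu_H-n+1)$ separately, using the reductions of Lemmas~\ref{lemnoleaf}, \ref{lemstraight} and \ref{lemsplitting} for the first and the beading construction for the second.

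\emph{Upper bound on $\min_H$.} Start from an optimal tree $T$ with $k^*$ Steiner points. By Lemma~\ref{lemnoleaf} it has no Steiner leaf. Suppress its degree two Steiner points to obtain the reduced topology, and let $B$ be its set of nonterminal branch vertices, all of degree at least three. Each reduced edge $e$ corresponds to a path in $T$ with $c_e\ge0$ interior degree two Steiner points. Hence $k^*=|B|+\sum_e c_e$. The triangle inequality (Lemma~\ref{lemstraight}) gives that the endpoint distance of $e$ is at most $(c_e+1)R$, so $\lceil \norm{\cdot}/R\rceil\le c_e+1$. Apply Lemma~\ref{lemsplitting} to obtain a full topology $H$ and placement $x$ with all auxiliary vertices at the positions of the split vertices. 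The edges of $H$ are the reduced edges, each with count at most $c_e+1$, together with zero length edges of count $0$. If $x$ leaves $B(P)$, project coordinatewise; this does not increase any count, as noted after \eqref{eqbranchobjective}. Thus
\[
\mu_H\le \sum_e (c_e+1)=\sum_e c_e+|E_{\rm red}|.
\]
The reduced topology is a tree on $n+|B|$ vertices, so it has $n+|B|-1$ edges. Hence $\mu_H\le k^*-|B|+n+|B|-1=k^*+n-1$, which is the desired inequality.

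\emph{Lower bound on $k^*$.} Fix $H$ and a minimizer $x$ of \eqref{eqbranchobjective}, and write $q_e=q_e(x)$. Contract the zero count edges, identifying coincident vertices; terminals are distinct, so each contracted class contains at most one terminal. Bead each positive edge into $q_e$ equal segments of length at most $R$. This yields a straight line tree spanning $P$ with all edges of length at most $R$, provided coincident points on different edges are treated as a graph embedding, as the problem permits.

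Count the Steiner points. The subdivision points number $\sum_{q_e>0}(q_e-1)$. The branch points are the contracted classes that contain no terminal. Let $H'$ be the contracted tree, with vertex set the classes and edge set the positive edges. Then $|V(H')|=|E_{>0}|+1$, so the number of nonterminal classes is $|E_{>0}|+1-n$. The total number of Steiner points is therefore
\[
\sum_{e} q_e-|E_{>0}|+|E_{>0}|+1-n=\mu_H-n+1.
\]
This gives $k^*\le\mu_H-n+1$.

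\emph{Main obstacle.} The delicate step is this second count. A contracted class could consist of a terminal merged with branch points, or of a Steiner class of degree two or one in $H'$; either case must not break the identity, since it only overcounts. Crossings and overlaps between beaded segments must also be allowed, because the problem asks for a tree, not a planar embedding. Finally, the degenerate case $n=2$ must be handled, where no full topology has an internal vertex and $H$ is a single edge, so the formula should be checked directly.
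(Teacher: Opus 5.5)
Your proposal is correct and follows essentially the same route as the paper. The beading direction uses the same count: contract the zero-count edges, observe that each of the $n$ terminals lies in its own class of a contracted tree with $|E_{>0}|+1$ vertices, and add the $\sum_{q_e>0}(q_e-1)$ subdivision points. The other direction uses the same leaf removal, suppression, splitting and projection. There your direct bookkeeping $k^*=|B|+\sum_e c_e$ and $|E_{\rm red}|=n+|B|-1$ replaces the paper's reuse of the counting identity, and because $\lceil 0\rceil\le c_e+1$ it even avoids the paper's preliminary identification of coincident vertices.

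One point to tighten is in the beading direction. Rather than asserting that the problem permits distinct vertices at the same position, argue as the paper does: identify coincident vertices and delete one edge from each resulting cycle. This preserves feasibility and can only lower the Steiner count, so $k^*\le\mu_H-n+1$ holds either way.
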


\begin{proof}
Fix $H$ and branch coordinates $x$, set $q_e=q_e(x)$, and let $z=|\{e\mid q_e(x)=0\}|$. Contracting them leaves $2n-2-z$ vertices. Distinct terminals lie in distinct components formed by these contractions, so $n-2-z$ remaining vertices are nonterminal. Beading the edges adds $\sum_{q_e>0}(q_e-1)=\sum_{e\in E(H)}q_e-(2n-3-z)$ degree two Steiner points. Before coincident vertices are identified, the total number of Steiner vertices is therefore
$$
 (n-2-z)+\sum_{e\in E(H)}q_e-(2n-3-z)
 =1-n+\sum_{e\in E(H)}q_e.
$$
All expanded edges have length at most $R$. Identifying coincident vertices and deleting one edge from each resulting cycle preserves feasibility and cannot increase this number. Hence
$$
 k^*\le \min_{H\in\mathcal F_n}(\mu_H-n+1).
$$

Conversely, identifying coincident vertices of an optimum tree and deleting one edge from each resulting cycle preserve feasibility and cannot increase the number of Steiner points. A strict decrease would contradict optimality, so the resulting tree still has $k^*$ Steiner points and distinct vertex positions. By Lemma~\ref{lemnoleaf}, it has no Steiner leaf. Apply Lemma~\ref{lemstraight} to every maximal degree two Steiner path, suppress its internal vertices, and split the reduced topology by Lemma~\ref{lemsplitting}. Let $q_e$ be the number of segments of length at most $R$ represented by a positive full topology edge and set $q_e=0$ on split edges. The same count gives
$$
 k^*=1-n+\sum_{e\in E(H)}q_e.
$$
Projecting the auxiliary positions onto $B(P)$ cannot increase any edge length and preserves split edges. Thus the vector $q$ remains feasible and
$$
 \mu_H-n+1\le1-n+\sum_{e\in E(H)}q_e=k^*.
$$
\end{proof}

\subsection{Optimization for a fixed topology}
\label{subsecfixed}

We optimize \eqref{eqbranchobjective} in the space of segment count vectors, without enumerating those vectors. For a fixed full Steiner topology $H=(V,E)$, define the continuous segment count set
\begin{equation}
\label{eqcountbody}
C_H=
\left\{q\in\R_{\ge0}^{E} \,\middle|\,
\begin{array}{l}
\exists x\in B(P)^{S(H)},~\mathcal B_p(X_u(x)-X_v(x),Rq_e)\text{ for all }e=uv\in E
\end{array}
\right\}.
\end{equation}
The set $C_H$ is convex. If $q^1,q^2\in C_H$ have witnesses $x^1,x^2$ and $t\in[0,1]$, then $x=tx^1+(1-t)x^2\in B(P)^{S(H)}$ and
$$
 \norm{X_u(x)-X_v(x)}
 \le t\norm{X_u(x^1)-X_v(x^1)}
 +(1-t)\norm{X_u(x^2)-X_v(x^2)}
 \le R\bigl(tq^1_e+(1-t)q^2_e\bigr)
$$
for every $e=uv$. Hence $tq^1+(1-t)q^2\in C_H$. Closedness follows by taking a convergent subsequence of witnesses in the compact set $B(P)^{S(H)}$.

\begin{lemma}
\label{lemcountbodyequivalence}
For every full topology $H$, $\mu_H=\min\{\mathbf 1^Tq\mid q\in C_H\cap\Z_{\ge0}^{E}\}$.
\end{lemma}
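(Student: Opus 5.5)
We prove the two inequalities separately. The key fact is that, for an integer vector $q\ge0$ and a point $x\in B(P)^{S(H)}$, the condition $\mathcal B_p(X_u(x)-X_v(x),Rq_e)$ for all $e$ means exactly $\norm{X_u(x)-X_v(x)}\le Rq_e$ for all $e$. This uses the defining property of $\mathcal B_p$: it holds iff $t\ge0$ and $\norm{y}\le t$. For $p=a/b$ this needs a short check that the local variables $r_i=|y_i|^{1/b}$ and $s=t^{1/b}$ realize the formula. Since $q_e$ is an integer, the norm condition is in turn equivalent to
$$
 q_e\ge\left\lceil\frac{\norm{X_u(x)-X_v(x)}}{R}\right\rceil=q_e(x).
$$
So $q\in C_H\cap\Z^E_{\ge0}$ holds exactly when there is a witness $x\in B(P)^{S(H)}$ with $q\ge q(x)$ coordinatewise.

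For $\mu_H\ge\min$, take a minimizer $x^*$ of \eqref{eqbranchobjective}; it exists by the lower semicontinuity argument already given. Set $q=q(x^*)$. This vector is a nonnegative integer vector, and $x^*$ is a witness for it, so $q\in C_H$ and $\mathbf 1^Tq=\mu_H$.

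For $\mu_H\le\min$, take any $q\in C_H\cap\Z^E_{\ge0}$ with witness $x$. By the equivalence above, $q_e\ge q_e(x)$ for every $e$. Summing gives $\mathbf 1^Tq\ge\sum_e q_e(x)\ge\mu_H$.

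It remains to check that the minimum on the right is attained. The set $C_H\cap\Z^E_{\ge0}$ is nonempty by the first direction, and $\mathbf 1^Tq$ takes nonnegative integer values on it, so the minimum exists. The only step needing any care is the faithfulness of the $\mathcal B_p$ encoding for rational $p$; everything else is immediate from the integrality of $q$ and the ceiling.
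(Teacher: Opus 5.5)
Your proposal is correct and follows the paper's proof. In one direction, the ceiling vector induced by a branch placement lies in $C_H\cap\Z_{\ge0}^{E}$ with the same objective value. In the other, any integer $q$ with witness $x$ dominates the ceiling vector at $x$ coordinatewise. Your extra checks, the faithfulness of the $\mathcal B_p$ encoding for $p=a/b$ and the attainment of the integer minimum, are details the paper leaves implicit, and both are sound.
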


\begin{proof}
For branch coordinates $x$, the induced vector $q_e=\left\lceil {\norm{X_u(x)-X_v(x)}}/{R}\right\rceil$
belongs to $C_H\cap\Z_{\ge0}^{E}$ and has objective value equal to that of $x$ in \eqref{eqbranchobjective}. Thus the expression on the right is at most $\mu_H$.

Conversely, let $q\in C_H\cap\Z_{\ge0}^{E}$ and let $x$ witness $q\in C_H$. Then $\left\lceil{\norm{X_u(x)-X_v(x)}}/{R}\right\rceil\le q_e~(e=uv\in E),$
so the objective at $x$ is at most $\mathbf 1^Tq$. Taking minima gives the reverse inequality.
\end{proof}

The set $C_H$ is upward closed under the coordinatewise order. Define its continuous optimum by
\begin{equation}
\label{eqcontinuouscountrelaxation}
 \lambda_H=
 \min\{\mathbf 1^Tq\mid q\in C_H\}.
\end{equation}
Since all branch coordinates are restricted to $B(P)$, every edge length is at most $\Delta_B$. The same value is attained by the compact semialgebraic program
\begin{equation}
\label{eqcompactcontinuouscountrelaxation}
\begin{aligned}
 \lambda_H=\min \quad & \mathbf 1^Tq \\
 \text{s.t.}\quad
 &x_s\in B(P) \qquad(s\in S(H)),\\
 &\mathcal B_p(X_u(x)-X_v(x),Rq_e) \qquad(e=uv\in E),\\
 &0\le q_e\le \Delta_B/R \qquad(e\in E).
\end{aligned}
\end{equation}
For fixed $x$, the componentwise minimum is $q_e=\norm{X_u(x)-X_v(x)}/R\le\Delta_B/R$. Thus \eqref{eqcompactcontinuouscountrelaxation} has value $\lambda_H$. Its expanded formulation has $O(n)$ variables and constraints of constant degree, so Theorem~\ref{thmsemialgopt} computes $\lambda_H$ and an exact optimizer in $n^{O(n)}$.

Brazil, Ras, and Thomas obtain the same additive quantity $2n-4$ in a global comparison between a beaded Steiner minimal tree and an optimum minimum Steiner point tree~\cite[Proposition~3]{BrazilRasThomas2010}. The next lemma is the fixed topology convex rounding form used by our algorithm.

\begin{lemma}
\label{lemshortobjectivewindow}
For every full topology $H$, the integer optimum $\mu_H$ satisfies
$$
 \lceil\lambda_H\rceil
 \le
 \mu_H
 \le
 \lceil\lambda_H\rceil+|E(H)|-1
 =
 \lceil\lambda_H\rceil+2n-4.
$$
Thus at most $|E(H)|=2n-3$ objective values have to be tested.
\end{lemma}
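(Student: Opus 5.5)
The plan is to prove the two inequalities separately, using Lemma~\ref{lemcountbodyequivalence} to identify $\mu_H$ with an integer program over the convex set $C_H$.

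For the lower bound, $C_H\cap\Z_{\ge0}^E\subseteq C_H$, so $\mu_H\ge\lambda_H$ by the definition \eqref{eqcontinuouscountrelaxation}. Since $\mu_H$ is an integer (a sum of integer counts), it follows that $\mu_H\ge\lceil\lambda_H\rceil$.

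For the upper bound, take an exact optimizer $(q^*,x^*)$ of \eqref{eqcompactcontinuouscountrelaxation}. Theorem~\ref{thmsemialgopt} supplies one, and existence follows from compactness of $B(P)^{S(H)}$.

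1. By the remark after \eqref{eqcompactcontinuouscountrelaxation}, we may take $q^*_e=\norm{X_u(x^*)-X_v(x^*)}/R$ for each edge, so $\lambda_H=\sum_e q^*_e$.
2. Round up coordinatewise, setting $\bar q_e=\lceil q^*_e\rceil$. Since $C_H$ is upward closed with the same witness $x^*$, we get $\bar q\in C_H\cap\Z_{\ge0}^E$, and hence $\mu_H\le\sum_e\lceil q^*_e\rceil$.
3. Write $\lceil q^*_e\rceil=q^*_e+f_e$ with $f_e\in[0,1)$. Then $\sum_e\lceil q^*_e\rceil=\lambda_H+\sum_e f_e<\lambda_H+|E|$.
4. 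The left side is an integer strictly less than $\lambda_H+|E|$. Therefore it is at most $\lceil\lambda_H+|E|\rceil-1=\lceil\lambda_H\rceil+|E|-1$, using that $|E|$ is an integer.

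Using $|E(H)|=2n-3$ from the count of edges of a full topology gives the stated form $\lceil\lambda_H\rceil+2n-4$. The interval $[\lceil\lambda_H\rceil,\lceil\lambda_H\rceil+2n-4]$ contains exactly $2n-3$ integers, which gives the final sentence of the statement.

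The only delicate point is step 4, where the integrality of $\sum_e\lceil q^*_e\rceil$ turns the strict inequality into the $-1$. No step presents a genuine obstacle. Attainment of $\lambda_H$ is already guaranteed by the compact program \eqref{eqcompactcontinuouscountrelaxation}, and feasibility of the rounded vector needs nothing beyond upward closedness of $C_H$.
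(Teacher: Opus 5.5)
Your proof is correct and follows the paper's argument: you get the lower bound from $C_H\cap\Z_{\ge0}^{E}\subseteq C_H$ together with integrality of $\mu_H$, and the upper bound by rounding a continuous optimum up coordinatewise, using upward closure of $C_H$, the strict bound $\mathbf 1^T\lceil\bar q\rceil<\lambda_H+|E(H)|$, and integrality. Your step~1, which normalizes $q^*_e$ to the edge length ratio, is harmless but not needed.
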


\begin{proof}
Since $C_H\cap\Z_{\ge0}^{E}\subseteq C_H$ and $\mu_H$ is integral, $\mu_H\ge\lceil\lambda_H\rceil$. Let $\bar q$ be a continuous optimum. Upward closure gives $\lceil\bar q\rceil\in C_H\cap\Z_{\ge0}^{E}$, and therefore
$$
 \mu_H\le\mathbf 1^T\lceil\bar q\rceil
 <\mathbf 1^T\bar q+|E(H)|
 =\lambda_H+|E(H)|.
$$
The upper bound follows by integrality of $\mu_H$.
\end{proof}

Define the threshold body
$$
 K_{H,M}=C_H\cap\{q\in\R_{\ge0}^{E}\mid\mathbf 1^Tq\le M+\frac12\}.
$$
For integer $q$, the last inequality is equivalent to
$\mathbf 1^Tq\le M$. Hence $\mu_H\le M$ if and only if
$K_{H,M}\cap\Z^E\ne\emptyset$. Write $d=|E|=2n-3$.

For $M\ge\lceil\lambda_H\rceil$, let $\bar q$ be a continuous optimum
and set
$$
 \gamma=M+\frac12-\lambda_H\ge\frac12,\qquad q^0=\bar q+\frac{\gamma}{4d}\mathbf 1,\qquad\rho=\frac{\gamma}{4d}.
$$
For every $q\in B_2(q^0,\rho)$ and every coordinate $e$,
$$q_e\ge q^0_e-\rho=\bar q_e.$$
Moreover,
$$
 \mathbf 1^Tq
 \le
 \lambda_H+\frac{\gamma}{4}+\sqrt d\,\rho
 =
 \lambda_H+\frac{\gamma}{4}+\frac{\gamma}{4\sqrt d}
 \le
 \lambda_H+\frac{\gamma}{2}
 <
 M+\frac12.
$$
Upward closure therefore gives $B_2(q^0,\rho)\subseteq K_{H,M}.$

If $q\in K_{H,M}$, then $\|q\|_2\le \mathbf 1^Tq\le M+\frac12$, while $
 \|q^0\|_2
 \le \mathbf 1^Tq^0
 =\lambda_H+\frac{\gamma}{4}
 \le M+\frac12
$.
Hence
$$
 \|q-q^0\|_2
 \le \|q\|_2+\|q^0\|_2
 \le 2(M+\frac12)
 =2M+1.
$$
$$K_{H,M}\subseteq B_2(q^0,2M+1).$$
The body is full dimensional and has represented outer radius data.

\begin{lemma}
\label{lemkoracles}
For a fixed full topology $H$ and an integer $M\ge\lceil\lambda_H\rceil$, the operations required by Theorem~\ref{thmconvexintegerfeas} cost $n^{O(n)}$ on $K_{H,M}$ and on every body produced by the recursion. A returned lattice point can be mapped to $K_{H,M}\cap\Z^E$, and compatible branch coordinates can be recovered within the same bound.
\end{lemma}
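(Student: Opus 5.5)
The plan is to write $K_{H,M}$ as the projection of a single compact basic closed semialgebraic set, and then check that every body produced by the recursion of Theorem~\ref{thmconvexintegerfeas} is again such a projection of the same size. Theorem~\ref{thmsemialgopt} then supplies every needed operation at cost $n^{O(n)}$.

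\textbf{Lifted description.} The lifted set $\widehat K_{H,M}$ has the following variables:
\begin{itemize}
\item the counts $q\in\R^E$;
\item the branch coordinates $x\in\R^{2(n-2)}$;
\item the local variables of the $2n-3$ occurrences of $\mathcal B_p$.
\end{itemize}
Its constraints are the box constraints $x_s\in B(P)$, the formulas $\mathcal B_p(X_u(x)-X_v(x),Rq_e)$, the bounds $q_e\ge0$, and $\mathbf 1^Tq\le M+\frac12$. By the computation preceding the lemma, $\|q\|_2\le M+\frac12$ on $K_{H,M}$, so $q$ is bounded. The local variables $r_i$ and $s$ of each $\mathcal B_p$ occurrence are bounded by $\Delta_B$ and $(M+\frac12)R$ raised to fixed powers, so $\widehat K_{H,M}$ is compact. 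We have $N=O(n)$ variables, $s=O(n)$ constraints, and degree $\delta$ depending only on $p$. Theorem~\ref{thmsemialgopt} therefore gives emptiness, point recovery, and exact linear and quadratic optimization over the projection $K_{H,M}$ in $(s\delta)^{O(N)}=n^{O(n)}$ operations. The outer radius data $q^0$ and $2M+1$ come from the $\lambda_H$ computation, which costs $n^{O(n)}$ by \eqref{eqcompactcontinuouscountrelaxation}.

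\textbf{Closure under the recursion.} Three kinds of transformation act only on the projected coordinates.
\begin{itemize}
\item An affine section $\{q: Aq=b\}$ adds at most $d$ linear equations to the lifted system.
\item An injective affine substitution $q=Wy+c$, with $W$ a product of the $O(r)$ unimodular transformations, introduces new variables $y$ and $d$ linear equations linking them to $q$; projection then goes onto $y$.
\item A homothety $q=c+\alpha y$ is a special case of the substitution.
\end{itemize}
Along a branch of the recursion the depth is at most $d$. The substitutions compose, so each body can be stored with a single accumulated affine map $q=W'y+c'$ and one accumulated set of section equations, not as nested auxiliary variable blocks. The lifted system thus always keeps $O(n)$ variables and $O(n)$ constraints of bounded degree. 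The rational coefficients from the thin ellipsoid step, and the algebraic numbers such as $\lambda_H$ adjoined to $\mathbb D$, are charged as in Section~\ref{subseccomputation}. Each body therefore costs $n^{O(n)}$ per operation, and the outer radius data are transformed along with it. I expect this step to be the main obstacle: one must argue carefully that the recursion never nests lifts, so that the dimension $N$ does not grow with depth, and that compactness and the degree bound survive every transformation, including the ellipsoid reduction carried out in Appendix~\ref{appflatness}.

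\textbf{Recovery.} When the algorithm returns a lattice point $y$ of some recursive body, the stored accumulated map gives $q=W'y+c'$. Unimodularity and integral translations keep $q\in\Z^E$, and $q\in K_{H,M}$ by construction. To obtain branch coordinates, add the equations fixing $q$ to $\widehat K_{H,M}$ and call the point recovery of Theorem~\ref{thmsemialgopt} once more; this returns a lift $(q,x,\dots)$, so $x$ witnesses $q\in C_H$ within the same $n^{O(n)}$ bound. By Lemma~\ref{lemcountbodyequivalence}, this $x$ together with $q$ realizes objective value at most $\mathbf 1^Tq\le M$.
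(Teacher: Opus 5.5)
Your lifted description, and the idea of carrying one accumulated affine substitution so that lifts never nest, match the paper's proof. The recovery step, however, has a genuine gap. In the wide case of the recursion in Theorem~\ref{thmconvexintegerfeas}, the body $C_0=(1-\alpha)x^-+\alpha C$ lives in the \emph{same} coordinates and the same lattice $\Z^r$ as $C$. Its membership test is obtained by substituting $h^{-1}(z)=\alpha^{-1}\bigl(z-(1-\alpha)x^-\bigr)$ into the formula for $C$.

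Once you fold this homothety into your single accumulated map $q=W'y+c'$, that map contains the factor $\alpha^{-1}=w/(\phi(r)+1)$ and the translation $-\alpha^{-1}(1-\alpha)x^-$. Here $x^-$ is an exact optimizer, generally non-integral and possibly algebraic. Applied to a returned lattice point, the map yields a point of $K_{H,M}$, but in general not an integral one. So the sentence ``unimodularity and integral translations keep $q\in\Z^E$'' is false along any branch that has passed through a wide step. The returned point should not be pulled back through $h^{-1}$ at all. The recursion returns $\psi_s(t)$, which is already a lattice point of $C_0\subseteq C$ in the coordinates of $C$.

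The missing idea is to store two maps per body, as the paper does:
\begin{itemize}
\item a description map $\sigma_D$, composed with the unimodular parametrization at a section and with $h^{-1}$ at a homothety, and used only inside the lifted formula;
\item a return map $\rho_D$, composed with the unimodular parametrization at a section and left unchanged at a homothety.
\end{itemize}
Induction using $C_0\cap\Z^r\subseteq C\cap\Z^r$ then gives $\rho_D(D\cap\Z^r)\subseteq K_{H,M}\cap\Z^E$, which is exactly the mapping claim in the lemma.

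A smaller omission: ``the outer radius data are transformed along with it'' needs an argument for the section pullbacks $D'=\{t\mid At+b\in D\}$ with $A$ of full column rank. For example, recover $t_0\in D'$ and use $\|t-t_0\|_2\le 2R_0\|(A^TA)^{-1}A^T\|_F$ for $t\in D'$ when $D$ lies in a ball of radius $R_0$. Homotheties simply scale the radius.
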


\begin{proof}
Let $L_{H,M}(q,x,\eta)$ consist of the box constraints on the branch coordinates $x$, the local variables $\eta$ in the formulas $\mathcal B_p$, the edge constraints, $q\ge0$, and $\mathbf 1^Tq\le M+\frac12$. These constraints define a compact lift with $O(n)$ variables and constraints and degree depending only on the fixed metric.

For each recursive body $D\subseteq\R^r$, store an injective affine substitution $\sigma_D\colon\R^r\to\R^E$ and a reconstruction map $\rho_D\colon\R^r\to\R^E$ such that
$$
 D=\left\{z\in\R^r\,\middle|\,\exists x,\eta\quad
 L_{H,M}(\sigma_D(z),x,\eta)\right\},
 \qquad
 \rho_D(D\cap\Z^r)\subseteq K_{H,M}\cap\Z^E.
$$
Both maps are the identity at the root. For an integral section, compose both maps with its unimodular affine parametrization, which maps $\Z^{r-1}$ bijectively onto the integer points of the section. For a homothetic copy $D_0=h(D)\subseteq D$, use $\sigma_D\circ h^{-1}$ in the lifted formula and keep $\rho_D$ as the return map.

Since $\sigma_D$ is injective, compactness of the original lift implies compactness of every derived lift. Theorem~\ref{thmsemialgopt} therefore gives emptiness, point recovery, and exact linear and quadratic optimization in $n^{O(n)}$ time.

Outer radius data are preserved as well. Homotheties scale the radius. For a pullback
$$
 D'=\{t\in\R^k\mid At+b\in D\},
$$
where $A$ has full column rank, recover $t_0\in D'$. If $D$ lies in a ball of radius $R_0$, then
$$
 \|t-t_0\|_2
 \le 2R_0\|(A^TA)^{-1}A^T\|_F
 \qquad(t\in D').
$$
After $q\in K_{H,M}\cap\Z^E$ has been returned, fix $q$ in $L_{H,M}$ and apply Theorem~\ref{thmsemialgopt} to recover compatible branch coordinates.
\end{proof}

\begin{lemma}
\label{lemlenstrathreshold}
For a fixed topology $H$ and an integer $M\ge\lceil\lambda_H\rceil$, the predicate $\mu_H\le M$
is decided in $n^{O(n)}$. If it is true, the algorithm
returns a feasible integer count vector together with witness branch
coordinates.
\end{lemma}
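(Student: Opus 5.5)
The plan is to reduce the predicate to integer feasibility of the threshold body and then invoke Theorem~\ref{thmconvexintegerfeas}. As shown just before Lemma~\ref{lemkoracles}, for integer $q$ the constraint $\mathbf 1^Tq\le M+\frac12$ is equivalent to $\mathbf 1^Tq\le M$. Combined with Lemma~\ref{lemcountbodyequivalence}, this gives $\mu_H\le M$ if and only if $K_{H,M}\cap\Z^E\ne\emptyset$. So the whole task is deciding lattice emptiness of $K_{H,M}$ in dimension $r=d=2n-3=O(n)$.

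First I verify the hypotheses of Theorem~\ref{thmconvexintegerfeas}. $K_{H,M}$ is the intersection of the closed convex set $C_H$ with a closed half-space and the nonnegative orthant, so it is convex. It is compact because $K_{H,M}\subseteq B_2(q^0,2M+1)$. This inclusion also supplies the outer radius data with $a_0=q^0$ and $R_{\rm out}=2M+1$. Here $q^0=\bar q+\rho\mathbf 1$ is represented exactly, since $\bar q$ and $\lambda_H$ come from Theorem~\ref{thmsemialgopt} applied to \eqref{eqcompactcontinuouscountrelaxation} in $n^{O(n)}$ time, and $\gamma,\rho$ are field operations on these values. The inner ball $B_2(q^0,\rho)$ additionally shows full dimensionality, although the theorem does not require it. Lemma~\ref{lemkoracles} then bounds the per-body operation cost $A$ by $n^{O(n)}$ for $K_{H,M}$ and for every body generated recursively by sections, substitutions and homotheties. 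Theorem~\ref{thmconvexintegerfeas} therefore runs in $r^{O(r)}A=n^{O(n)}\cdot n^{O(n)}=n^{O(n)}$.

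For the output, suppose the algorithm returns a lattice point of a recursive body $D$. The reconstruction map $\rho_D$ from Lemma~\ref{lemkoracles} sends it into $K_{H,M}\cap\Z^E$, giving an integer count vector $q\ge0$ with $\mathbf 1^Tq\le M$ and $q\in C_H$. Next I fix $q$ in the lifted system $L_{H,M}$ and apply Theorem~\ref{thmsemialgopt} once more. This returns branch coordinates $x\in B(P)^{S(H)}$ with $\norm{X_u(x)-X_v(x)}\le Rq_e$ for every edge, again within $n^{O(n)}$.

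The main obstacle is the bound on $A$, that is, whether the recursively produced bodies stay exactly representable with $O(n)$ variables and constant degree. Lemma~\ref{lemkoracles} is what handles this, and the present proof only needs to cite it correctly. A smaller point is that the running time must not depend on the magnitude of $M$, which may be huge relative to $n$. This holds because the cost model charges arithmetic operations rather than bit lengths, and $M$ enters only as a coefficient.
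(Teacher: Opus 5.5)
Your proposal is correct and follows essentially the same route as the paper. You reduce $\mu_H\le M$ to $K_{H,M}\cap\Z^E\ne\emptyset$, apply Theorem~\ref{thmconvexintegerfeas} using the outer radius data and Lemma~\ref{lemkoracles}, and recover branch coordinates by fixing $q$ in the lift. The one difference is that the paper replaces $q$ by the tightened vector $q^*_e=\lceil\norm{X_u(x)-X_v(x)}/R\rceil\le q_e$ before returning it. The lemma does not require this, but it makes the returned counts match the geometry of $x$.
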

\begin{proof}
Using the outer radius data above and Lemma~\ref{lemkoracles}, Theorem~\ref{thmconvexintegerfeas} decides whether $K_{H,M}\cap\Z^E$ is empty in $n^{O(n)}$.

If an integer point $q$ is returned, recover witness coordinates $x$ by Lemma~\ref{lemkoracles} and define
$$
 q^*_e=
 \left\lceil\frac{\norm{X_u(x)-X_v(x)}}{R}\right\rceil.
$$
The vector $q^*$ belongs to $C_H\cap\Z_{\ge0}^{E}$ and satisfies $q^*\le q$ coordinatewise. Since $q$ is integral and $\mathbf 1^Tq\le M+\frac12$, we have $\mathbf 1^Tq^*\le\mathbf 1^Tq\le M$. Return $(q^*,x)$.
\end{proof}

\begin{remark}
\label{remcontinuousnotexact}
The relaxation \eqref{eqcontinuouscountrelaxation} is used only for the objective window in Lemma~\ref{lemshortobjectivewindow}. Convex integer feasibility over $K_{H,M}$ finds the integer optimum. No branch coordinates or minimizer of total length are rounded.
\end{remark}

\subsection{Algorithm and complexity}
\label{subsecalgorithm}

The resulting implicit algorithm is given below.

\begin{algorithm}[H]
\caption{Exact STP-MSPBEL}\label{algbranchfulltopology}
\KwData{terminals $P=\{p_1,\ldots,p_n\}$ and edge bound $R$}
\KwResult{an optimal implicit representation}
\If{$n=1$}{
  \KwRet{the empty tree}\;
}
\If{$n=2$}{
  $q\leftarrow\left\lceil\norm{p_1-p_2}/R\right\rceil$\;
  \KwRet{the single topology edge with segment count $q$}\;
}
$\textsc{best}\leftarrow\infty$ and $\textsc{sol}\leftarrow\bot$\;
\ForEach{full Steiner topology $H$ on the labeled terminals}{
  Compute $\lambda_H$ and a represented continuous optimizer $\bar q_H$ with witness $\bar x_H$ by \eqref{eqcompactcontinuouscountrelaxation}\;
  \For{$M=\lceil\lambda_H\rceil$ \KwTo $\lceil\lambda_H\rceil+|E(H)|-1$}{
    Test $K_{H,M}\cap\Z^{E(H)}$ using Lemma~\ref{lemlenstrathreshold}\;
    \If{a pair $(q^*,x_H)$ is returned}{
      $v\leftarrow\mathbf 1^Tq^*-n+1$\;
      \If{$v<\textsc{best}$}{
        $\textsc{best}\leftarrow v$ and $\textsc{sol}\leftarrow(H,x_H,q^*)$\;
      }
      \textbf{break}\;
    }
  }
}
\KwRet{$\textsc{sol}$}
\end{algorithm}

The preceding construction gives an inner ball and an outer-radius bound for each initial threshold body. The normalization in Appendix~\ref{appflatness} depends only on the dimension.

\begin{theorem}
\label{thmcorrectness}
The algorithm returns an optimal implicit representation.
\end{theorem}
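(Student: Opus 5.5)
The proof splits into the trivial cases and the main case $n\ge3$, where we show the algorithm's value $\textsc{best}$ equals $k^*$ and that $\textsc{sol}$ realizes it.

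\emph{Small cases.} For $n=1$ the empty tree is optimal. For $n=2$ the plan is to show $k^*=\lceil\norm{p_1-p_2}/R\rceil-1$. The upper bound comes from equal subdivision, as in Lemma~\ref{lemstraight}. For the lower bound, the unique path joining $p_1$ and $p_2$ in any feasible tree has at least $\lceil\norm{p_1-p_2}/R\rceil$ edges by the triangle inequality. This path therefore has at least $\lceil\norm{p_1-p_2}/R\rceil-1$ interior Steiner vertices.

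\emph{Main case $n\ge3$.} Fix $H\in\mathcal F_n$. By Lemma~\ref{lemshortobjectivewindow}, $\mu_H$ lies in the tested window $\lceil\lambda_H\rceil,\ldots,\lceil\lambda_H\rceil+|E(H)|-1$. Since $\mu_H\le M$ is equivalent to $K_{H,M}\cap\Z^E\neq\emptyset$, Lemma~\ref{lemlenstrathreshold} gives the following. The tests fail for every $M<\mu_H$, and the first success occurs exactly at $M=\mu_H$. At that $M$ the returned pair satisfies $q^*\in C_H\cap\Z^E_{\ge0}$ with $\mathbf 1^Tq^*\le\mu_H$. By Lemma~\ref{lemcountbodyequivalence}, $\mathbf 1^Tq^*\ge\mu_H$, so $\mathbf 1^Tq^*=\mu_H$. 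Hence each topology contributes the value $\mu_H-n+1$. The final $\textsc{best}$ is $\min_H(\mu_H-n+1)$, which equals $k^*$ by Lemma~\ref{lembeadcount}.

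\emph{Validity of the representation.} It remains to show that $\textsc{sol}=(H,x_H,q^*)$ is a valid implicit representation whose expansion is feasible with $k^*$ Steiner points. By construction, $q^*_e=\lceil\norm{X_u(x_H)-X_v(x_H)}/R\rceil$. Thus $q^*_e=0$ exactly when the edge has zero length, which is required for contraction. For $q^*_e>0$, the $q^*_e$ equal segments each have length at most $R$. The counting argument in the first half of the proof of Lemma~\ref{lembeadcount} then gives a feasible tree with at most $1-n+\mathbf 1^Tq^*=k^*$ Steiner points, after identifying coincident vertices and deleting cycle edges. By optimality, it has exactly $k^*$.

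\emph{Main obstacle.} The delicate point is this last step: coincident non-adjacent vertices or crossing beads could make the expanded object fail to be a tree. The plan is to argue, as in Lemma~\ref{lembeadcount}, that this cleanup never increases the count. Since the count cannot drop below $k^*$, the representation is optimal up to this canonical identification. We interpret ``optimal implicit representation'' in that sense.
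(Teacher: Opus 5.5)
Your proof is correct and follows the paper's argument essentially step for step. You use the window from Lemma~\ref{lemshortobjectivewindow}, exact threshold decisions by Lemma~\ref{lemlenstrathreshold}, the sandwich $\mu_H\le\mathbf 1^Tq^*\le\mu_H$ via Lemma~\ref{lemcountbodyequivalence}, and Lemma~\ref{lembeadcount} to identify $\min_H(\mu_H-n+1)$ with $k^*$. Your extra validity paragraph anticipates the paper's Theorem~\ref{thmreporting}, and you can drop the hedge ``optimal up to identification'': identifying two coincident distinct vertices would strictly lower the count below $k^*$, since at least one of them is a Steiner point (terminals are distinct), so no identification ever occurs.
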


\begin{proof}
For $n=1$, the empty tree is optimal. For $n=2$, every feasible tree contains a path from $p_1$ to $p_2$, and equal subdivision of the segment $p_1p_2$ is optimal by the triangle inequality.

Assume $n\ge3$ and fix a full Steiner topology $H$. By Lemma~\ref{lemshortobjectivewindow}, the tested interval contains $\mu_H$. Lemma~\ref{lemlenstrathreshold} decides every threshold exactly, so the first feasible threshold is $M_H=\mu_H$. At this threshold the returned vector $q^*$ is integer feasible for $C_H$ and satisfies $\mathbf 1^Tq^*\le M_H$. Lemma~\ref{lemcountbodyequivalence} gives
$$
 M_H=\mu_H\le\mathbf 1^Tq^*\le M_H,
$$
so $\mathbf 1^Tq^*=\mu_H$. The algorithm selects a topology minimizing $\mu_H-n+1$, which equals the optimum STP-MSPBEL value by Lemma~\ref{lembeadcount}. The stored topology, branch coordinates, and segment counts are therefore an optimal implicit representation.
\end{proof}

\begin{theorem}
\label{thmdetcomplexity}
An optimal implicit representation for $n\ge3$ is computed deterministically in $n^{O(n)}$. The operation count is independent of the optimum number of Steiner points.
\end{theorem}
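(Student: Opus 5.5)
The proof is a direct accounting over Algorithm~\ref{algbranchfulltopology}, with correctness already supplied by Theorem~\ref{thmcorrectness}. The plan is to bound three quantities:
\begin{itemize}
\item the number of outer iterations;
\item the work per topology;
\item the size of every number in the run, to confirm that nothing depends on $k$.
\end{itemize}
Each factor should be $n^{O(n)}$, and none of them should involve the optimum value $k^*$ or the ratio $\Delta_B/R$.

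\textbf{Outer loop.} By Lemma~\ref{lemfullcount}, the loop runs over
\[
F_n=(2n-5)!!\le (2n)^{n}=n^{O(n)}
\]
topologies. Enumerating them is also cheap. The recursion in the proof of Lemma~\ref{lemfullcount} inserts terminal $i$ into one of the $2i-5$ edges of a topology on $i-1$ terminals. This produces each topology in $O(n)$ operations, so enumeration costs $n^{O(n)}$ in total.

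\textbf{Per-topology work.} For a fixed $H$ I account for the steps in order.
\begin{itemize}
\item Program~\eqref{eqcompactcontinuouscountrelaxation} has $N=O(n)$ variables after the $\mathcal B_p$ expansions and $s=O(n)$ constraints, each of degree $\delta$ depending only on $p$. Theorem~\ref{thmsemialgopt} therefore computes $\lambda_H$ and $(\bar q_H,\bar x_H)$ in $(s\delta)^{O(N)}=n^{O(n)}$ operations.
\item Computing $\lceil\lambda_H\rceil$ is one ceiling on a represented algebraic number, which costs unit time in the model of Section~\ref{subseccomputation}.
\item The inner loop runs at most $|E(H)|=2n-3$ times by Lemma~\ref{lemshortobjectivewindow}. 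Each threshold test, including recovery of $(q^*,x_H)$, costs $n^{O(n)}$ by Lemma~\ref{lemlenstrathreshold}.
\item The remaining bookkeeping costs $O(n)$: computing $\mathbf 1^Tq^*$, comparing with \textsc{best}, and storing a triple of size $O(n)$.
\end{itemize}
Multiplying gives
\[
n^{O(n)}\cdot\bigl(n^{O(n)}+(2n-3)\,n^{O(n)}\bigr)=n^{O(n)}.
\]

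\textbf{Independence from $k$.} This is the point that needs care. The magnitudes $M$, $\lambda_H$, the outer radius $2M+1$, and the inner radius $\gamma/(4d)$ all grow with $\Delta_B/R$, and hence with $k$. The argument rests on the computation model: the cost of Theorem~\ref{thmsemialgopt} is measured in operations over $\mathbb D$ with coefficient bit length not charged, so it depends only on $N$, $s$, and $\delta$. Likewise Theorem~\ref{thmconvexintegerfeas} costs $r^{O(r)}A$ with $r=d=O(n)$, and the sizes of $M$ or $R_{\rm out}$ do not appear in it. I would verify this by inspecting each invoked bound for any dependence on $M$, $R_{\rm out}$, or $\rho$. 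In particular, the number of recursion levels in Theorem~\ref{thmconvexintegerfeas} should be governed by the dimension (at most $r$ sections) and not by the aspect ratio $R_{\rm out}/\rho$. The normalization in Appendix~\ref{appflatness} is stated to depend only on the dimension, and this is exactly what makes the count independent of $k$.

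Finally, the output triple $(H,x_H,q^*)$ has $O(n)$ entries, so producing it adds nothing beyond the stated bound. The $\Theta(n+k)$ subdivision cost arises only if the representation is expanded, and that step is not part of this theorem.
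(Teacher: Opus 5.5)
Your proposal is correct and follows the paper's proof: $(2n-5)!!=n^{O(n)}$ topologies, an $n^{O(n)}$ continuous relaxation via Theorem~\ref{thmsemialgopt}, and at most $2n-3$ threshold tests of cost $n^{O(n)}$ each by Lemma~\ref{lemlenstrathreshold}, with independence from $k$ coming from the unit-cost model in which magnitudes are not charged. One side remark should be corrected: the inner radius $\gamma/(4d)$ does not grow with $k$, because $M\le\lceil\lambda_H\rceil+d-1$ forces $\gamma<d+\tfrac12$, so only $M$ and the outer radius $2M+1$ grow.
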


\begin{proof}
Lemma~\ref{lemfullcount} gives $(2n-5)!!=n^{\Theta(n)}$ full Steiner topologies. For each topology, the continuous relaxation takes time $n^{O(n)}$, and Lemma~\ref{lemshortobjectivewindow} requires at most $2n-3$ threshold tests. Each test and witness recovery takes time $n^{O(n)}$ by Lemma~\ref{lemlenstrathreshold}. The time per topology is therefore $n^{O(n)}$, and multiplication by $(2n-5)!!$ preserves this bound.
\end{proof}

For $n=2$, the optimum is $\max\{0,\lceil\norm{p_1-p_2}/R\rceil-1\}$ and is computed in constant time. For $n=3$, the unique full Steiner topology has one Steiner vertex, and its objective is
$$
 \min_{x\in B(P)}\sum_{i=1}^3
 \left\lceil\frac{\norm{x-p_i}}{R}\right\rceil-2.
$$
This is the branching problem for three terminals studied by Shin and Choi~\cite{ShinChoi2023}.

\subsection{Reporting}
\label{subsecreporting}

\begin{theorem}
\label{thmreporting}
Given an optimal implicit representation with $k$ Steiner points, an explicit optimum is reported in $\Theta(n+k)$.
\end{theorem}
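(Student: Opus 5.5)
The plan is to expand the implicit representation $(H,x,q)$ edge by edge and count the output, which gives matching upper and lower bounds. The representation has $2n-3$ topology edges, $n-2$ branch coordinates, and counts $q_e$. By the count identity in the proof of Lemma~\ref{lembeadcount}, these satisfy $k=1-n+\sum_e q_e$ before coincident vertices are identified. Since the representation is optimal, identification cannot strictly decrease the count, so the expanded tree has exactly $k$ Steiner points (possibly after the identifications described below). In particular $\sum_e q_e=k+n-1$.

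\textbf{Contract zero-count edges.} First contract every edge with $q_e=0$ using a union--find or a single DFS over the $O(n)$-size tree $H$. Each contracted component is labeled by a terminal if it contains one; otherwise it is labeled by one representative branch vertex. Two terminals never lie in the same component, because the terminals are distinct and zero-count edges join coincident points. This step costs $O(n)$.

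\textbf{Bead the positive edges.} For each edge $e=uv$ with $q_e>0$, output the $q_e-1$ points
$$
X_u(x)+\tfrac{j}{q_e}\bigl(X_v(x)-X_u(x)\bigr),\qquad j=1,\ldots,q_e-1,
$$
together with the $q_e$ path edges. Each point uses $O(1)$ represented field operations. By the constraint $\norm{X_u(x)-X_v(x)}\le Rq_e$, every segment has length at most $R$. Summing over edges gives $O(n+\sum_e q_e)=O(n+k)$ operations.

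\textbf{Correctness and the lower bound.} The output is a tree: contraction of a tree along edges is a tree, and subdividing edges preserves this. It spans $P$ and has $k'=1-n+\sum_e q_e$ Steiner vertices, where $k'$ counts the component-representative branch vertices and the beads. Optimality together with Lemma~\ref{lembeadcount} gives $k'=k^*=k$. The $\Omega(n+k)$ lower bound is immediate, since an explicit tree must list its $n+k$ vertices.

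\textbf{Main obstacle: coincident vertices.} The delicate step is that distinct vertices of the expansion might coincide geometrically. For example, a bead could land on a branch point or on a terminal through a non-contracted edge. The proof of Lemma~\ref{lembeadcount} handles this by identifying coincident points and deleting cycle edges, which cannot increase the count. Because the representation is optimal, any identification that merged two vertices and then deleted an edge would strictly reduce $k$, contradicting optimality.

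I would argue this directly: an identification that removes a Steiner point gives a feasible tree with fewer Steiner points, which is impossible. Hence the straightforward output is already a valid straight-line tree up to harmless overlaps. If a strictly embedded output is required, I would instead argue that such overlaps cannot occur at an optimum. This would require only a constant-time check per output vertex and no global geometric search, keeping the total at $\Theta(n+k)$.
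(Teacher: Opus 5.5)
Your proposal is correct and follows essentially the same route as the paper: contract the zero-count edges, bead each positive edge, use the counting identity from the proof of Lemma~\ref{lembeadcount} to get exactly $k$ Steiner points with $O(n+k)$ output work, and exclude coincident vertices because identifying such a pair would give a feasible tree with fewer Steiner points (at least one vertex of the pair is a Steiner point, since the terminals are distinct). The hedging in your last paragraph is unnecessary, because that identification argument already shows that no two distinct expanded vertices share a position, so no check of any kind is needed. Likewise, you can skip the paper's preliminary replacement of $q$ by the ceiling counts $q'_e$ and its observation $q'=q$, since you bead directly with $q_e$ and use $\norm{X_u(x)-X_v(x)}\le Rq_e$.
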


\begin{proof}
Let $(H,x,q)$ be an optimal implicit representation and define
$$
 q'_e=\left\lceil\frac{\norm{X_u(x)-X_v(x)}}{R}\right\rceil.
$$
Then $q'\le q$. A strict decrease in the total count would give, by Lemma~\ref{lembeadcount}, a feasible tree with fewer than $k$ Steiner points. Hence $\mathbf 1^Tq'=\mathbf 1^Tq$.

First contract every topology edge with count zero. Bead each remaining edge into $q'_e$ equal segments. The counting identity in the proof of Lemma~\ref{lembeadcount} gives exactly
$$
 k=1-n+\sum_e q'_e
$$
Steiner vertices at this stage. Suppose that two distinct remaining vertices have the same position. The terminals are distinct, so at least one of them is a Steiner vertex. Identifying the pair, removing loops, and deleting one edge from each resulting cycle would preserve feasibility and reduce the number of Steiner points. This contradicts optimality. No further identification is possible. The expansion is therefore optimal, and listing its vertices and edges takes time $\Theta(n+k)$.
\end{proof}

\begin{corollary}
\label{corexplicitcomplexity}
For $n\ge2$, an explicit optimum with $k$ Steiner points is computed deterministically in $n^{O(n)}+O(n+k)$.
\end{corollary}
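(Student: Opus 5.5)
The corollary follows by composing the two main theorems, after handling the small cases separately. The plan is to first compute an optimal implicit representation and then expand it, adding the two costs.

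For $n=2$, Algorithm~\ref{algbranchfulltopology} computes $q=\lceil\norm{p_1-p_2}/R\rceil$ in constant time under the unit-cost model of Section~\ref{subseccomputation}. The expansion is the equal subdivision of the segment $p_1p_2$ into $q$ pieces. This yields $k=\max\{0,q-1\}$ Steiner points; by the argument in the proof of Theorem~\ref{thmcorrectness} (every feasible tree contains a $p_1$–$p_2$ path, together with the triangle inequality), this is optimal. Listing the points costs $O(1+k)=O(n+k)$, and the constant term is absorbed in $n^{O(n)}$.

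For $n\ge3$, I would invoke Theorem~\ref{thmdetcomplexity} to obtain, deterministically in $n^{O(n)}$ operations, a triple $(H,x,q)$. By Theorem~\ref{thmcorrectness} this triple is an optimal implicit representation. Note that $k$ is not an input: it is whatever optimum value is implicitly encoded, namely $k=\mathbf 1^Tq-n+1$ by Lemma~\ref{lembeadcount}. I would then apply Theorem~\ref{thmreporting} to this representation, which produces an explicit optimum tree with $k$ Steiner points in $\Theta(n+k)$ operations. Summing the two phases gives $n^{O(n)}+O(n+k)$.

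The only point needing care is that the reporting step recomputes $q'_e$ from $x$ and performs the subdivision arithmetic on algebraic coordinates stored as real univariate representations. Under the computation model, each such field operation and each fixed-degree $L_p$ norm evaluation is unit cost, so each of the $O(n)$ edges costs $O(1)$ plus $O(1)$ per emitted bead point. This is what the proof of Theorem~\ref{thmreporting} already charges, so I do not expect a genuine obstacle. The main thing to state explicitly is that determinism and correctness are inherited from the two theorems, and that the $n^{O(n)}$ term is independent of $k$.
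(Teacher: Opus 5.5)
Your proposal is correct and follows the paper's own argument: the paper likewise derives the corollary from the closed form for $n=2$ together with Theorem~\ref{thmdetcomplexity} (an optimal implicit representation in $n^{O(n)}$) and Theorem~\ref{thmreporting} (expansion in $\Theta(n+k)$). Your added remarks on optimality for $n=2$ and on the unit cost of bead arithmetic agree with the computation model of Section~\ref{subseccomputation}; they make explicit steps that the paper leaves implicit.
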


The corollary follows from the closed form for $n=2$ and Theorems~\ref{thmdetcomplexity} and~\ref{thmreporting}.

\subsection{Bottleneck equivalence and a hybrid bound}
\label{subsechybrid}

For $j\in\Z_{\ge0}$, let
$$
 \beta_j(P)=\min_T\max_{e\in E(T)}\operatorname{len}(e),
$$
where $T$ ranges over geometric trees spanning $P$ and containing at most $j$ Steiner points. The minimum is attained after projecting Steiner coordinates into $B(P)$ and minimizing over the finitely many abstract topologies with at most $j$ Steiner vertices.

\begin{lemma}
\label{lembottleneckthreshold}
For every integer $j\ge0$, the STP-MSPBEL instance has a feasible tree with at most $j$ Steiner points if and only if $\beta_j(P)\le R$. Consequently, if $k$ is the optimum number of Steiner points, then $k=\min\{j\in\Z_{\ge0}\mid\beta_j(P)\le R\}$.
\end{lemma}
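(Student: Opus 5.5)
The statement is essentially a reformulation of definitions, so I would prove the two directions of the equivalence directly and then read off the characterization of $k$. I will rely on the attainment of the minimum in the definition of $\beta_j(P)$, which was noted just before the lemma.

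For the forward direction, suppose the STP-MSPBEL instance has a feasible tree $T$ with at most $j$ Steiner points. Then $T$ is a geometric tree spanning $P$ with at most $j$ Steiner points, so it is admissible in the minimization defining $\beta_j(P)$. Every edge of $T$ has length at most $R$, so $\beta_j(P)\le\max_{e\in E(T)}\operatorname{len}(e)\le R$.

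For the converse, suppose $\beta_j(P)\le R$. Because the minimum is attained, there is a geometric tree $T$ spanning $P$ with at most $j$ Steiner points whose longest edge has length exactly $\beta_j(P)\le R$. This is the one step that uses more than definitions: attainment is what lets us avoid an infimum argument. If one preferred not to rely on the earlier remark, I would reprove it as follows. There are finitely many abstract topologies with at most $j$ Steiner vertices. For each, project the Steiner coordinates coordinatewise into $B(P)$; this is nonexpansive in $L_p$ and fixes the terminals. The maximum edge length is then a continuous function on the compact set $B(P)^{j'}$, so it attains its minimum. Either way, $T$ has every edge of length at most $R$. One might worry that $T$ is not a tree in the strict sense, for example because coincident vertices or zero-length edges appear. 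In that case, identify coincident points and delete cycle edges as in the proof of Lemma~\ref{lembeadcount}. This keeps all edge lengths at most $R$ and does not increase the Steiner count. The result is a feasible STP-MSPBEL tree with at most $j$ Steiner points.

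For the consequence, feasibility with at most $j$ Steiner points is monotone in $j$. By the equivalence, the set $\{j\in\Z_{\ge0}\mid\beta_j(P)\le R\}$ is exactly the set of $j$ with $j\ge k$. This set is nonempty because an optimum exists, for instance by subdividing the edges of a spanning tree on $P$ via Lemma~\ref{lemstraight}. Its minimum is therefore $k$. No step poses a real obstacle; the only care needed is the attainment/compactness point in the converse direction.
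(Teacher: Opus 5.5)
Your proof is correct and follows the paper's argument. A tree with at most $j$ Steiner points is feasible exactly when its bottleneck is at most $R$, so the equivalence follows from the attained minimum defining $\beta_j(P)$, and minimizing over feasible budgets gives $k$; you simply make explicit the attainment step and the identification of coincident vertices, which the paper's two-line proof leaves implicit.
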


\begin{proof}
A tree with at most $j$ Steiner points is feasible for STP-MSPBEL if and only if its bottleneck is at most $R$. Minimizing the feasible budget gives the formula for $k$.
\end{proof}

\begin{theorem}[Theorem~1 in~\cite{BandyapadhyayEtAl2024}]
\label{thmbandyapadhyay}
Euclidean Bottleneck Steiner Tree with $k$ Steiner points can be solved in $k^{O(k)}\cdot n^{O(1)}$ time. In particular, the problem is fixed-parameter tractable.
\end{theorem}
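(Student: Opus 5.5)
This result is imported from Bandyapadhyay et al.~\cite{BandyapadhyayEtAl2024}; in the present paper it is used only as a black box. Below is the route I would take to reprove it, in the language of the earlier sections. I would split the problem into a decision version for a fixed threshold $\lambda$ and an outer search over finitely many critical values of $\lambda$.

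\emph{Decision version.} Fix $\lambda$ and let $G_\lambda$ be the graph on $P$ joining terminals at distance at most $\lambda$. Its connected components are forced. An optimal tree may be assumed to contain a spanning forest of each component, and every other terminal incidence goes through a Steiner point.

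\emph{Reduction to a Steiner skeleton.} By Lemma~\ref{lemnoleaf}, pruning Steiner leaves and the exchange argument behind Lemma~\ref{lembeadcount}, we get the following structure. The Steiner points induce a forest with at most $k$ vertices. Each Steiner tree component must join at least two components of $G_\lambda$, so at most $O(k)$ components of $G_\lambda$ matter. The topology of this skeleton has $k^{O(k)}$ choices, counted as in Lemma~\ref{lemfullcount}. For each choice:
\begin{enumerate}
\item Guess, by colour coding or a deterministic splitter family costing $k^{O(k)}\log n$, which skeleton vertex serves which terminal component.
\item Inside each component, test whether some terminal lies within distance $\lambda$ of the assigned Steiner location. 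This is a disjunction of $n$ constraints on the $O(k)$ coordinates of that Steiner point.
\end{enumerate}
The resulting feasibility question is a semialgebraic existence problem in $O(k)$ variables with $n^{O(1)}$ constraints of constant degree. Theorem~\ref{thmsemialgopt}, which costs $(s\delta)^{O(N)}$ with $N=O(k)$, would give $n^{O(k)}$ for it. That is not fixed-parameter, so this step is the main obstacle.

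\emph{Removing the $n^{O(k)}$ factor.} I would first replace the per-component disjunction by a choice of a single representative terminal per incidence, guessed through the splitter family. The semialgebraic system then has $O(k)$ variables and $O(k)$ constraints, and Theorem~\ref{thmsemialgopt} gives $k^{O(k)}$ time for it. The delicate point is proving that $k^{O(k)}n^{O(1)}$ representative choices suffice. I would try to show that an optimal Steiner location can be moved until it touches at most $O(1)$ terminal disks tightly, so that only the terminals defining those tight constraints need to be guessed.

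\emph{Outer search.} The optimal bottleneck is a critical value. It is either a terminal distance from the $O(n^2)$ candidates, or a root of the univariate polynomial obtained by eliminating the skeleton variables from a tight subsystem. I would binary search over the sorted terminal distances and then, within the final interval, minimize $\lambda$ directly via the optimization part of Theorem~\ref{thmsemialgopt}. For the $L_p$ extension stated in~\cite{BandyapadhyayEtAl2024}, each distance constraint is replaced by the formula $\mathcal B_p$, which changes only constants.
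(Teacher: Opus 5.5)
\textbf{Main point: the reproof sketch has a genuine gap at the step that carries the theorem.}

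The paper gives no proof of this statement. It quotes it as Theorem~1 of \cite{BandyapadhyayEtAl2024}, takes the $\ell_p$ extension from the remark following it there, and uses it only as a black box for Theorem~\ref{thmhybrid}. Your first sentence therefore matches the paper.

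The earlier steps of your sketch are routine: fixing $\lambda$, the forced components of $G_\lambda$, and a $k^{O(k)}$ guess of the skeleton and its component incidences. Two small points there:
\begin{itemize}
\item Bounding the number of components by $O(k)$ needs the packing fact that a Steiner point lies within $\lambda$ of only a constant number of pairwise $\lambda$-far terminals. You do not state this.
\item No colour coding is needed, since both sides of that assignment have size $O(k)$.
\end{itemize}
What these steps actually yield is an $n^{O(k)}$ algorithm, i.e.\ an XP bound.

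The paragraph on removing the $n^{O(k)}$ factor states the required claim, that $k^{O(k)}n^{O(1)}$ representative choices suffice, but does not prove it. The idea you propose for it cannot deliver it. Grant that each Steiner point can be moved until it is pinned by $O(1)$ tight terminal constraints. The identities of those terminals still range over $n$ candidates for each of the $k$ Steiner points, so guessing them costs $n^{O(k)}$ again.

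A splitter family does not help either. It only makes a hidden $k$-set colourful, and you would then need a procedure that exploits the colouring with polynomial work per state. Here the choices at different Steiner points are coupled through the continuous Steiner--Steiner distance constraints, so the search does not decompose into such states.

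What is missing is a structural or algorithmic statement of one of two kinds:
\begin{itemize}
\item one that bounds the candidates for the whole configuration by $f(k)\,n^{O(1)}$, rather than $n^{O(1)}$ per Steiner point; or
\item one that allows the Steiner points to be treated independently.
\end{itemize}
Supplying this is exactly the technical content of \cite{BandyapadhyayEtAl2024}.

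Without it, your outer search inherits the same problem. Minimizing $\lambda$ on the final interval via Theorem~\ref{thmsemialgopt} still involves an $n$-fold disjunction per terminal incidence. That theorem applies only to basic closed sets, so the disjunction must be expanded into $n^{O(k)}$ branches. As written, your argument establishes a $k^{O(k)}n^{O(k)}$ bound, not fixed-parameter tractability.
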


The problem definition in~\cite{BandyapadhyayEtAl2024} uses a budget of at most $k$ Steiner points. The sentence following Theorem~1 extends the algorithm to the $\ell_p$ metric for $p\in\Q_{\ge 1}\cup\{\infty\}$. Hence $\beta_j(P)$ and an optimal bottleneck Steiner tree with at most $j$ Steiner points are computable in $j^{O(j)}n^{O(1)}$ time for every fixed metric considered here.

\begin{theorem}
\label{thmhybrid}
Let $k$ be the optimum STP-MSPBEL value. For every fixed
$p\in\Q_{\ge 1}\cup\{\infty\}$, the value $k$ is computed
deterministically in
$$
 \min\{n^{O(n)},\ k^{O(k)}n^{O(1)}\}.
$$
An optimal implicit representation is computed in $n^{O(n)}$ time, and an
explicit optimum is computed in
$$
 \min\{n^{O(n)}+O(n+k),\ k^{O(k)}n^{O(1)}\}.
$$
For $k=0$, the parameterized term is interpreted as polynomial in $n$.
\end{theorem}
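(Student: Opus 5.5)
The plan is to run two algorithms and take whichever finishes first, without knowing $k$ in advance. The first is Algorithm~\ref{algbranchfulltopology}: by Theorems~\ref{thmcorrectness} and~\ref{thmdetcomplexity} it returns an optimal implicit representation, and hence $k=\mathbf 1^Tq^*-n+1$, in $n^{O(n)}$ time. Corollary~\ref{corexplicitcomplexity} extends this to an explicit optimum in $n^{O(n)}+O(n+k)$. The small cases $n\le2$ are covered by the closed form.

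The second algorithm uses Lemma~\ref{lembottleneckthreshold}. For $j=0,1,2,\ldots$ it computes $\beta_j(P)$ together with an optimal bottleneck tree, using Theorem~\ref{thmbandyapadhyay} in its $\ell_p$ extension, and stops at the first $j$ with $\beta_j(P)\le R$. That $j$ equals $k$, and the tree returned at that step is an explicit feasible tree with at most $k$ Steiner points, so it is optimal. The comparison $\beta_j(P)\le R$ is one sign test in the computation model. The total cost is $\sum_{j=0}^{k} j^{O(j)}n^{O(1)}\le (k+1)\,k^{O(k)}n^{O(1)}=k^{O(k)}n^{O(1)}$. For $k=0$ this reduces to a single polynomial call, which matches the stated convention. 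The explicit output of this branch has size $O(n+k)$, and that is already absorbed in $k^{O(k)}n^{O(1)}$.

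To get the minimum of the two bounds without knowing which is smaller, I will interleave the two algorithms step by step: alternate single operations and halt as soon as either one terminates with an answer. The total time is then at most twice the smaller running time, which gives $\min\{n^{O(n)},k^{O(k)}n^{O(1)}\}$ for the value. The same interleaving, with the first branch extended by the $\Theta(n+k)$ reporting step of Theorem~\ref{thmreporting}, gives the explicit bound. The implicit representation is produced only by the first branch, so for it I will state just the $n^{O(n)}$ bound.

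The main obstacle is making the interleaving rigorous. Each algorithm must be deterministic and halt in a bounded number of unit-cost operations, so that alternating steps is well defined. Both bounds must also be expressible in the same unit-cost model of Section~\ref{subseccomputation}. For this I will check that the cited $k^{O(k)}n^{O(1)}$ bound carries over to this model, which charges arithmetic and sign tests in the same way. If it only holds in a bit model, I would either restate the hybrid bound in the weaker common model or note that both branches admit the same accounting.
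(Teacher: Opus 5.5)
Your proposal is correct and follows the paper's proof. The $n^{O(n)}$ branch comes from Theorems~\ref{thmdetcomplexity} and~\ref{thmreporting}, and the parameterized branch runs Theorem~\ref{thmbandyapadhyay} for increasing budgets, stopping at the first $j$ with $\beta_j(P)\le R$, which equals $k$ by Lemma~\ref{lembottleneckthreshold}. The paper handles $j=0$ by directly testing connectivity of the graph of terminal pairs at distance at most $R$, rather than by a bottleneck call, and your step-by-step interleaving of the two branches usefully makes explicit how the minimum is attained without knowing $k$, which the paper leaves implicit.
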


\begin{proof}
First test connectivity of the graph on $P$ whose edges have length at most $R$. If it is connected, then $k=0$ and any spanning tree of this graph is an explicit optimum. Otherwise, apply Theorem~\ref{thmbandyapadhyay} for budgets $j=1,2,\ldots$ and stop at the first $j$ for which $\beta_j(P)\le R$. Lemma~\ref{lembottleneckthreshold} shows that this first budget is exactly $k$, and the last bottleneck call returns an explicit STP-MSPBEL optimum. The total running time is
$$
 \sum_{j=1}^{k} j^{O(j)}n^{O(1)}=k^{O(k)}n^{O(1)}.
$$

Theorem~\ref{thmdetcomplexity} computes $k$ together with an optimal implicit representation in $n^{O(n)}$ time, and Theorem~\ref{thmreporting} expands that representation in additional time $O(n+k)$.
\end{proof}

\section{Conclusion}
\label{secconclusion}

The splitting and beading reduction gives every optimum a full Steiner representation. On each full topology, the convex segment count formulation and its continuous relaxation leave $2n-3$ candidate integer objective values. Compact lifted semialgebraic descriptions support the optimization and point recovery used in the recursion. The flatness argument in Appendix~\ref{appflatness} then yields an optimal implicit representation in $n^{O(n)}$ time. This count is independent of the optimum number of Steiner points. Explicit expansion takes time $\Theta(n+k)$.

The bottleneck threshold relation and the algorithm of Bandyapadhyay et al. give the complementary value bound $\min\{n^{O(n)},\allowbreak k^{O(k)}n^{O(1)}\}$. Appendix~\ref{appdualebst} records the full topology algorithm for the dual bottleneck value with running time $(n+j)^{O(n)}$.

\appendix
\section{Convex Integer Feasibility}
\label{appflatness}

This appendix proves Theorem~\ref{thmconvexintegerfeas}. The recursion follows Lenstra~\cite{Lenstra1983}. The reduction of lattice width to shortest vector computation and the wide case are taken from the proof of Theorem~4.7 in~\cite{DadushPeikertVempala2011}.

For a full dimensional compact convex set $C\subseteq\R^r$, let
$$
 w_{\Z}(C)=
 \min_{a\in\Z^r\setminus\{0\}}
 \left(\max_{x\in C}a^Tx-\min_{x\in C}a^Tx\right).
$$
By~\cite{BanaszczykEtAl1999,Rudelson2000}, there are constants $C_{\rm f},c_{\rm f}>0$ such that
$$
 \operatorname{int}(C)\cap\Z^r=\emptyset
 \quad\Longrightarrow\quad
 w_{\Z}(C)\le C_{\rm f}r^{4/3}\log^{c_{\rm f}}(2r).
$$
Fix integers $C',m\ge1$ with $C_{\rm f}r^{4/3}\log^{c_{\rm f}}(2r)\le C'r^m$ for all $r\ge1$, and put $\phi(r)=C'r^m$.

\subsection{Dimension reduction and minimum width}

The following lemma is the rational form of the small volume case in the proof of Theorem~4.7 in~\cite{DadushPeikertVempala2011}.

\begin{lemma}
\label{lemthindimensionreduction}
Let $C\subseteq\R^r$ be nonempty, compact, and convex. Suppose that point recovery and exact linear and quadratic optimization over $C$ cost $A$, and that
$$
 C\subseteq a_0+R_{\rm out}B_2^r.
$$
In $r^{O(r)}A$ time, one can either certify that $C$ is full dimensional, certify that $C\cap\Z^r=\emptyset$, or compute a primitive $a\in\Z^r$ and $s\in\Z$ such that
$$
 C\cap\Z^r=C\cap\{z\in\R^r\mid a^Tz=s\}\cap\Z^r.
$$
\end{lemma}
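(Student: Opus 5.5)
We will build a well-conditioned simplex inside $C$ by greedy width maximization, in the spirit of the John-type rounding used in Lenstra's recursion. Then we branch on whether its smallest lattice-relevant width is large or small.

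First, recover a point $v_0\in C$ with the point recovery routine. Inductively, for $i=1,\ldots,r$, let $U_i$ be the orthogonal complement of $\operatorname{span}\{v_1-v_0,\ldots,v_{i-1}-v_0\}$. We compute a direction $u\in U_i$ and a point $v_i\in C$ that approximately maximize the distance from the affine hull $v_0+\operatorname{span}\{v_j-v_0\}_{j<i}$. Each candidate direction needs exact linear optimization. We replace the quadratic step ``maximize $\|P_{U_i}(x-v_0)\|_2$'' by linear optimization in the $2(r-i+1)$ directions $\pm b$, where $b$ ranges over a rational orthogonal basis of $U_i$. This loses a factor $\sqrt r$. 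If at some step every such maximum equals zero, then $C$ lies in a proper affine subspace; handle it as in the thin case below. Let $h_i$ be the height achieved at step $i$. The standard greedy argument gives
$$
 C\subseteq v_0+\sum_i[-c\sqrt r\,h_i,c\sqrt r\,h_i]\,\hat u_i .
$$
In the other direction, the simplex $\operatorname{conv}\{v_0,\ldots,v_r\}\subseteq C$ has heights comparable to the $h_i$. So $C$ is sandwiched between an ellipsoid $E_{\rm in}$ and $r^{O(1)}E_{\rm in}$. The only obstruction to certifying full dimensionality is exact degeneracy: if every $h_i>0$, the simplex has nonempty interior, and we output ``full dimensional.''

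The thin branch handles degeneracy, and this is where integrality enters. Suppose the process stops at step $i$, so $C\subseteq v_0+W$ with $W=\operatorname{span}\{v_j-v_0\}_{j<i}$ a proper subspace. Then $C$ lies in an affine hyperplane $\{x: u^Tx=u^Tv_0\}$ with $u\in W^\perp$. Since $W$ is spanned by represented vectors, $u$ can be taken from the coefficient domain. We must turn $u$ into a primitive integer normal $a$, which is the main obstacle, because $u$ may be irrational.

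The plan for that step is to work with the lattice $\Z^r$ projected onto $W^\perp$. There are two subcases.

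If the lattice points of the affine hull $v_0+W$ span it only in a lower dimension, or there are none, we still need one integral hyperplane containing all of $C\cap\Z^r$. We use the outer radius: every integer point of $C$ lies in $v_0+W$ and in $a_0+R_{\rm out}B_2^r$. Consider the projection $\pi$ onto the line $\R u$. All integer points of $C$ project to the single value $u^Tv_0$. Next, apply the shortest vector routine of Theorem~\ref{thmgeneralnormsvp} to the lattice $\Z^r$ with the norm
$$
 \|z\|=\max\bigl\{\|P_W z\|_2/(2R_{\rm out}),\ |u^Tz|/\varepsilon\bigr\}
$$
for a suitable rational $\varepsilon$. Any two integer points $z,z'\in C$ satisfy $\|z-z'\|\le1$. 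A short vector search with Dirichlet/Minkowski-type counting therefore yields an integer $a$ with $a^T(z-z')=0$ for all such pairs. Concretely, we take $a$ to be a primitive integer vector in the dual lattice that is short in the dual norm. Its inner product with any difference of two integer points of $C$ is then an integer of absolute value less than $1$, hence $0$.

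Once $a$ is found, set $s=a^Tz_0$ for any integer point $z_0$. To obtain $z_0$, or to certify emptiness, observe that all integer points share the value $a^Tz$. We compute $\min_C a^Tx$ and $\max_C a^Tx$ by linear optimization. If no integer lies in this interval, we certify $C\cap\Z^r=\emptyset$. Otherwise we take $s$ to be the unique such integer; uniqueness comes from the small dual norm, which forces the width in direction $a$ to be less than $1$.

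The cost is $O(r)$ optimizations of cost $A$, plus one call to Theorem~\ref{thmgeneralnormsvp} in dimension $r$. That call costs $2^{O(r)}$ membership queries, each of which is an exact optimization. The total is $r^{O(r)}A$.

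The step I expect to be delicate is choosing $\varepsilon$ so that the dual shortest vector provably has width less than $1$ in direction $a$ while $a$ is still nonzero. I would first try Minkowski's theorem on the dual body: it has volume proportional to $(R_{\rm out})^{-(r-1)}\varepsilon^{-1}$, so taking $\varepsilon$ small enough forces a nonzero dual lattice point inside it.
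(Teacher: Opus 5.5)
Your plan is essentially the paper's proof: the greedy step certifies full dimensionality, and in the degenerate case you enclose $C$ in a body that is thin across its affine hull, apply Minkowski's theorem to the polar to get a nonzero $a\in\Z^r$ of dual norm at most $1/2$, compute it with the shortest vector routine (a shortest vector is automatically primitive), and read off width at most $1/2$; the paper does the same with a rationally rounded thin ellipsoid $x_0+BB_2^r$ and a Euclidean shortest vector in $B^T\Z^r$, where you use an exact cylinder and a general-norm search on $\Z^r$. One fix is needed: when $\dim W\le r-2$, your gauge built from $P_W$ and a single $u$ vanishes on $W^\perp\cap u^\perp$, so it is not a norm, its polar has zero volume, and Minkowski's theorem does not apply; use $P_{u^\perp}$ instead (that is, enlarge $W$ to the hyperplane $u^\perp$, as your preceding sentence already suggests), and use a radius such as $2R_{\rm out}+1$ so the cylinder stays nondegenerate when $R_{\rm out}=0$.
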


\begin{proof}
Recover $x_0\in C$ and successively maximize squared distance from the affine span of the points already found. If $r$ nonzero residuals, then $C$ is full dimensional. Otherwise $x_0,\ldots,x_k$, where $k<r$ and
$$
 \operatorname{aff}C=x_0+\operatorname{span}\{x_1-x_0,\ldots,x_k-x_0\}.
$$
Exact Gram--Schmidt gives an orthogonal matrix $Q$ whose first $k$ columns span $\operatorname{lin}(C-C)$.

Set $R=2\lceil R_{\rm out}\rceil+1$, $L=4R$, and $\varepsilon=(16rL)^{-r}$. Round $Q$ entrywise to denominator $N=\lceil4rR/\varepsilon\rceil+1$. The resulting rational matrix $\widetilde Q$ satisfies
$$
 \|\widetilde Q-Q\|_2<\frac{\varepsilon}{4R}<\frac12.
$$
Thus $\widetilde Q$ is nonsingular. With
$$
 B=\widetilde Q^{-T}\operatorname{diag}(LI_k,\varepsilon I_{r-k}),
$$
the last $r-k$ coordinates of $Q^T(x-x_0)$ vanish for $x\in C$, and the chosen bounds give
$$
 \|B^{-1}(x-x_0)\|_2^2
 \le
 \left(\frac{(1+\|\widetilde Q-Q\|_2)R}{L}\right)^2
 +\left(\frac{\|\widetilde Q-Q\|_2R}{\varepsilon}\right)^2
 <1.
$$
Hence $C\subseteq x_0+BB_2^r$. If $\kappa_r=\operatorname{vol}(B_2^r)$, then $k\le r-1$, every singular value of $\widetilde Q$ exceeds $1/2$, and $\kappa_r\ge(2/\sqrt r)^r$. Therefore
$$
 |\det B|
 \le2^rL^{r-1}\varepsilon
 <\frac{\kappa_r}{8^r}.
$$
The ellipsoid $\{y\mid4\|B^Ty\|_2\le1\}$ has volume greater than $2^r$. Minkowski's first theorem~\cite[Theorem~B.10]{DadushPeikertVempala2011} and Theorem~\ref{thmgeneralnormsvp} therefore give a shortest vector $B^Ta$ of $B^T\Z^r$ with $4\|B^Ta\|_2\le1$. The coefficient vector $a\in\Z^r$ is primitive.

Let $\ell=\min_{x\in C}a^Tx$ and $u=\max_{x\in C}a^Tx$. The ellipsoid containment gives $u-\ell\le2\|B^Ta\|_2\le1/2$. Thus $[\ell,u]$ contains no integer, or it contains the single integer $s=\lceil\ell\rceil$. These alternatives give the required certificate. The cost is $r^{O(r)}A$.
\end{proof}

The next lemma gives a rational Lenstra type normalization~\cite{Lenstra1983,DadushPeikertVempala2011}.

\begin{lemma}
\label{lemrationalnormalization}
Let $C\subseteq\R^r$ be compact, convex, and full dimensional, with exact point recovery, quadratic optimization, and
$$
 C\subseteq a_0+R_{\rm out}B_2^r.
$$
In $O(r)$ optimization calls and $r^{O(1)}$ further operations, one can compute an affine map $S(x)=Gx+g$ with nonsingular $G\in\Q^{r\times r}$ and a represented point $c$ such that
$$
 c+\frac{1}{4r^r}B_2^r
 \subseteq S(C)
 \subseteq c+3\sqrt r\,B_2^r.
$$
\end{lemma}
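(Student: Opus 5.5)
We build a large-volume simplex inside $C$ by successive width maximization, in the manner of the dimension test in Lemma~\ref{lemthindimensionreduction}, and then map it affinely onto a fixed rational reference simplex. Concretely, we recover a point $x_0\in C$ and proceed greedily. Having chosen $x_0,\ldots,x_{j-1}$ with Gram--Schmidt residual directions $u_1,\ldots,u_{j-1}$, we maximize the squared distance $\|P_j(x-x_0)\|_2^2$ over $C$, where $P_j$ is the orthogonal projection onto the complement of $\operatorname{span}\{x_i-x_0\}$. This is one quadratic optimization and returns $x_j$. Full dimensionality guarantees that every residual $h_j=\|P_j(x_j-x_0)\|_2$ is positive. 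This takes $r$ optimization calls. A second, symmetric pass, in which we also minimize the linear functional $u_j^T x$, gives control of both sides. Altogether $O(r)$ calls suffice.

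Next we take the matrix $V=[x_1-x_0,\ldots,x_r-x_0]$. It is upper triangular in the orthonormal basis $u_1,\ldots,u_r$, with diagonal entries $h_j$. We then derive two containments.

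\emph{Outer containment.} By maximality of the residual at stage $j$, every $x\in C$ has $|u_j^T(x-x_0)|\le h_j$ after the two-sided pass; in the worst case we use $2h_j$, recentering at the midpoint. In coordinates $y=V^{-1}(x-x_0)$, the triangular structure and the width bounds confine $C$ to a box whose side lengths are polynomial in $r$.

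\emph{Inner containment.} $C$ contains $\operatorname{conv}\{x_0,\ldots,x_r\}$, which maps under $V^{-1}$ to the standard simplex. The standard simplex contains a ball of radius $1/(r+\sqrt r)$ around its barycenter.

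We must avoid irrational entries and exact Gram--Schmidt square roots in $G$. To do so, we take $G$ to be a rational approximation of a suitably scaled $V^{-1}$, or more simply we use $V^{-1}$ itself, which is rational in the represented coordinates of the $x_j$ since no normalization of the $u_j$ is needed. We then rescale by a rational factor so that the outer box fits in $3\sqrt r\,B_2^r$ around $c=S(\text{barycenter})$.

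The main obstacle is the outer bound in the triangular coordinates. Off-diagonal entries of $V$ can make the box in $y$ coordinates skewed, and the naive bound is exponential in $r$. The stated inner radius $1/(4r^r)$ is weak enough that we can afford to lose a factor of $r^{O(r)}$. So the plan is to use the orthonormal-frame coordinates $z_j=u_j^T(x-x_0)/h_j$ for the outer bound, which gives $|z_j|\le 2$ and hence the radius $2\sqrt r$, comfortably below $3\sqrt r$. In these coordinates the simplex $\operatorname{conv}\{x_i\}$ becomes the image of the standard simplex under a unit-diagonal triangular matrix with entries bounded by $2$. Its inradius is then bounded below via its volume $1/r!$ divided by its surface area, with facets of diameter at most $4\sqrt r$. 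This gives an inradius at least of order $1/(r!\,(4\sqrt r)^{r})\ge 1/(2r^r)$ after crude estimates.

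Finally, we replace $h_j$ and $u_j$ by rational approximations. The norms $h_j$ involve square roots, so we round them to nearby rationals, and we perturb $u_j$ within an error that costs at most a constant factor in each radius, in the same way as in Lemma~\ref{lemthindimensionreduction}. This yields the factors $1/(4r^r)$ and $3\sqrt r$ with $r^{O(1)}$ extra arithmetic.
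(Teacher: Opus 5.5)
Your construction is the paper's: one point recovery, $r$ successive maximizations of the squared residual, the map $T(x)_j=u_j^T(x-x_0)/h_j$, then rational rounding of its linear part. The one step where you depart from the paper is the inner radius, and it fails as written. The inequality $1/(r!\,(4\sqrt r)^r)\ge 1/(2r^r)$ is false for every $r\ge1$. At $r=1$ it reads $1/4\ge 1/2$, and for large $r$ the left side is smaller by a factor growing like $(4/e)^r r^{r/2}$. The bound $|z_j|\le 2$ makes things worse in two ways. With facet diameter $4\sqrt r$, even a correct volume/height computation yields only $1/((r+1)(4\sqrt r)^{r-1})$, which is already below $1/(4r^r)$ at $r=2$. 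Also, the outer radius about the common center $c$ (not about the origin) is then only bounded by $4\sqrt r$, not $3\sqrt r$. So the stated constants are not established.

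Both defects disappear once you use the bound you already have. Since $u_j\perp x_i-x_0$ for $i<j$, maximality gives $|u_j^T(x-x_0)|=|u_j^TP_j(x-x_0)|\le h_j$ with no second pass. Hence $T(C)\subseteq[-1,1]^r$, and $T(C)\subseteq c+2\sqrt r\,B_2^r$ for every $c\in T(C)$.

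For the inner ball, the paper notes that $V=(T(x_1),\dots,T(x_r))$ is upper triangular with unit diagonal and entries in $[-1,1]$. So $\det V=1$ and $\|V\|_2\le r$, which give $\sigma_{\min}(V)\ge r^{-(r-1)}$. The inball of the standard simplex then maps to a ball of radius at least $1/(2r^r)$ about $c=V\mathbf 1/(r+\sqrt r)$. Your volume route also works if done carefully. The simplex has volume $1/r!$, and by Hadamard each facet has $(r-1)$-volume at most $(2\sqrt r)^{r-1}/(r-1)!$. So every height is at least $(2\sqrt r)^{-(r-1)}$; the factorials cancel rather than leaving a $1/r!$. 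The barycenter, which is where you center (not the incenter), is then at distance at least $1/((r+1)(2\sqrt r)^{r-1})\ge 3/(8r^r)$ from every facet.

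Finally, make the rounding quantitative. $V^{-1}$ and the rows $u_j^T/h_j=w_j^T/\|w_j\|_2^2$ have algebraic, not rational, entries. Here $w_j$ is the residual, so square roots are not even the issue. Hence $G\in\Q^{r\times r}$ must come from rounding, with an additive error tied to $R_{\rm out}$. Round the linear part $A$ of $T$ entrywise to denominator $N>rR_{\rm out}/\varepsilon$ with $\varepsilon=1/(8r^r)$, and anchor at $a_0$ via $S(x)=G(x-a_0)+T(a_0)$. Then $\sup_{x\in C}\|S(x)-T(x)\|_2\le\|G-A\|_2R_{\rm out}<\varepsilon$. Comparing support functions gives the radii $1/(4r^r)$ and $3\sqrt r$.
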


\begin{proof}
Choose $x_0\in C$. For $i=1,\ldots,r$, let $x_i\in C$ maximize squared distance from $\operatorname{aff}\{x_0,\ldots,x_{i-1}\}$. Let $h_i>0$ and $e_i$ be the length and unit direction of the residual, and define
$$
 T(x)_i=\frac{e_i^T(x-x_0)}{h_i}.
$$
Then $T(C)\subseteq[-1,1]^r$. If $v_i=T(x_i)$ and $V=(v_1,\ldots,v_r)$, then $V$ is upper triangular with diagonal one and $\|V\|_2\le r$, so $\sigma_{\min}(V)\ge r^{-(r-1)}$. The inball of the standard simplex gives, for $c=V\mathbf1/(r+\sqrt r)$,
$$
 c+\frac{1}{2r^r}B_2^r
 \subseteq T(C)
 \subseteq c+2\sqrt r\,B_2^r.
$$

Let $A$ be the linear part of $T$ and $\varepsilon=1/(8r^r)$. Round $A$ entrywise to denominator $N=\lceil rR_{\rm out}/\varepsilon\rceil+1$ and call the resulting rational matrix $G$. Set $S(x)=G(x-a_0)+T(a_0)$. Then
$$
 \sup_{x\in C}\|S(x)-T(x)\|_2
 \le\|G-A\|_2R_{\rm out}<\varepsilon.
$$
The support functions of $S(C)$ and $T(C)$ differ by less than $\varepsilon$. Hence
$$
 c+\left(\frac{1}{2r^r}-\varepsilon\right)B_2^r
 \subseteq S(C)
 \subseteq c+(2\sqrt r+\varepsilon)B_2^r,
$$
which gives the stated bounds. The inner bound implies that $G$ is nonsingular.
\end{proof}

The next lemma uses the width to shortest vector reduction from the proof of Theorem~4.7 in~\cite{DadushPeikertVempala2011}.

\begin{lemma}
\label{lemminimumwidth}
Let $C\subseteq\R^r$ be compact, convex, and full dimensional. Suppose that point recovery and exact linear and quadratic optimization over $C$ cost $A$, and that represented outer radius data are given. A primitive vector attaining $w_{\Z}(C)$ and the endpoints of its width interval can be computed in $r^{O(r)}A$ time.
\end{lemma}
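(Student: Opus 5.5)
The plan follows Dadush--Peikert--Vempala: express the lattice width as the shortest vector of $\Z^r$ in the dual norm of the difference body, and compute it with Theorem~\ref{thmgeneralnormsvp}. For $a\in\R^r$ set
$$
 \|a\|_{C}^{*}=\max_{x\in C}a^Tx-\min_{x\in C}a^Tx=\max_{x,y\in C}a^T(x-y),
$$
which is the support function of $C-C$. Since $C$ is full dimensional, $C-C$ is a centrally symmetric convex body with nonempty interior. Hence $\|\cdot\|_C^*$ is a norm, and its unit ball $K^*=\{a\mid\|a\|_C^*\le1\}$ is the polar of $C-C$ and is centrally symmetric. By definition, $w_{\Z}(C)=\min_{a\in\Z^r\setminus\{0\}}\|a\|_C^*$. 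This is the shortest vector problem for the lattice $\Z^r$ with standard basis under the norm $\|\cdot\|_{K^*}$.

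I first normalize $C$ with Lemma~\ref{lemrationalnormalization}. This gives a rational affine map $S(x)=Gx+g$ with $c+\frac1{4r^r}B_2^r\subseteq S(C)\subseteq c+3\sqrt r\,B_2^r$. Then $S(C)-S(C)$ lies between the Euclidean balls of radii $\frac1{2r^r}$ and $6\sqrt r$, so its polar $K'$ lies between radii $\frac1{6\sqrt r}$ and $2r^r$. The ratio depends only on $r$, which is the Euclidean inner and outer bound promised after Theorem~\ref{thmgeneralnormsvp}. Widths transform by $a^TS(x)=a^TGx+a^Tg$, so $w_{\Z}(C)$ equals the shortest vector of the rational lattice $G^{-T}\Z^r$ in the norm of $K'$. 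This lattice is generated by the columns of $G^{-T}$, which is computable exactly in $r^{O(1)}$ operations.

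Next I implement the queries. Deciding $a\in K'$ requires evaluating $\|a\|^*_{S(C)}$ exactly. Substituting back, this is two exact linear optimizations over $C$ in the direction $G^Ta$: one maximum and one minimum. Each costs $A$, and their difference is compared with $1$. Theorem~\ref{thmgeneralnormsvp} then returns a shortest vector $G^{-T}a$ in $2^{O(r)}$ queries and operations, for a total of $2^{O(r)}A+r^{O(r)}A=r^{O(r)}A$ including the normalization. The coefficient vector $a\in\Z^r$ attains the minimum width and is primitive, since otherwise $a/\gcd(a)$ would be strictly shorter. The endpoints $\min_C a^Tx$ and $\max_C a^Tx$ come from two more exact linear optimizations.

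The main obstacle is checking that Theorem~\ref{thmgeneralnormsvp} applies in the exact model. That theorem is usually stated for a membership oracle with rational radii data. Here membership in $K'$ is exact through the semialgebraic optimizations, and the inner and outer radii $\frac1{6\sqrt r}$ and $2r^r$ are explicit rationals. The algorithm's deterministic enumeration, whose step count is bounded by a function of the sandwich ratio and $r$ only, therefore runs with exact comparisons. I would argue this carefully, possibly by replacing approximate membership with exact evaluation of $\|a\|^*$, which is at least as strong as any weak oracle the algorithm needs.
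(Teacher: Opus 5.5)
Your proposal is correct and follows essentially the same route as the paper. Both normalize by Lemma~\ref{lemrationalnormalization}, identify $w_{\Z}(C)$ with the shortest vector of $G^{-T}\Z^r$ in the norm whose unit ball is the polar of $S(C)-S(C)$, answer each membership query with two exact linear optimizations over $C$, and apply Theorem~\ref{thmgeneralnormsvp}. Your explicit polar radii $\frac{1}{6\sqrt r}$ and $2r^r$ and the $a/\gcd(a)$ primitivity argument only spell out details the paper leaves implicit.
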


\begin{proof}
Let $S(x)=Gx+g$ be given by Lemma~\ref{lemrationalnormalization}, put $D=S(C)-S(C)$, and let
$$
 D^*=\{y\in\R^r\mid y^Tz\le1\text{ for every }z\in D\}.
$$
Lemma~\ref{lemrationalnormalization} gives $(2r^r)^{-1}B_2^r\subseteq D\subseteq6\sqrt r\,B_2^r$, so $D^*$ is well-centered with Euclidean bounds depending only on $r$. For $a\in\Z^r$ and $y=G^{-T}a$, the reduction in~\cite[Theorem~4.7]{DadushPeikertVempala2011} gives
$$
 \|y\|_{D^*}
 =\max_{x\in C}a^Tx-\min_{x\in C}a^Tx.
$$
Membership in $D^*$ uses two exact linear optimizations over $C$. Since $a\mapsto G^{-T}a$ maps $\Z^r$ bijectively onto $G^{-T}\Z^r$, Theorem~\ref{thmgeneralnormsvp} returns a shortest vector whose coefficient vector attains $w_{\Z}(C)$. It is primitive, and two further optimizations give the interval endpoints. The cost is $r^{O(r)}A$.
\end{proof}

\subsection{Proof of Theorem~\ref{thmconvexintegerfeas}}

\begin{proof}
Consider a recursive call on a compact convex set
$C\subseteq\R^r$ in coordinates where the lattice is $\Z^r$.
First test emptiness and reject if $C$ is empty. If $r=0$, return
the unique point of $C$.

For $r>0$, apply Lemma~\ref{lemthindimensionreduction}. If it
certifies infeasibility, reject. If it returns a primitive
$a\in\Z^r$ and $s\in\Z$, choose
$W\in\operatorname{GL}_r(\Z)$ with $a^TW=e_r^T$. The map $t\longmapsto W\begin{pmatrix}t\\s\end{pmatrix}$ is a bijection from $\Z^{r-1}$ onto $\{z\in\Z^r\mid a^Tz=s\}$. Recurse on
$$
  \left\{
    t\in\R^{r-1}\ \middle|\
    W\begin{pmatrix}t\\s\end{pmatrix}\in C
  \right\}.
$$
If the recursive call returns $t$, return $W\begin{pmatrix}t\\s\end{pmatrix};$ otherwise reject. Thus, in the remaining case, $C$ is full dimensional.

Let $a$ be a minimum width direction from Lemma~\ref{lemminimumwidth}, and let $[\ell,u]$ be its width interval with $w=u-\ell$. If $w\le \phi(r)+1$, set $C_0=C$. Then every lattice point of $C$ lies on a level $a^Tz=s$ with $s\in\mathbb Z\cap[\ell,u]$.

If $w>\phi(r)+1$, let $x^-$ minimize $a^Tx$ and define
$$
  \alpha=\frac{\phi(r)+1}{w},
  \qquad
  C_0=(1-\alpha)x^-+\alpha C.
$$
Then $C_0\subseteq C$ and
$
  w_{\mathbb Z}(C_0)
  =\alpha w_{\mathbb Z}(C)
  =\phi(r)+1
  >\phi(r).
$
Hence the flatness theorem gives $\operatorname{int}(C_0)\cap\mathbb Z^r\ne\emptyset.$

Let
$
  \ell=\min_{z\in C_0}a^Tz,~
  u=\max_{z\in C_0}a^Tz.
$
In either case, $\mathbb Z\cap[\ell,u]$ contains at most
$\phi(r)+2$ integers.
Since $a$ is primitive, choose
$W\in\operatorname{GL}_r(\mathbb Z)$ such that
$a^TW=e_r^T$. For every
$s\in\mathbb Z\cap[\ell,u]$, define
$$
  \psi_s(t)
  =
  W\begin{pmatrix}t\\s\end{pmatrix}
$$
and recurse on
$$
  D_s
  =
  \left\{
    t\in\mathbb R^{r-1}\ \middle|\
    \psi_s(t)\in C_0
  \right\}.
$$
Accept and return $\psi_s(t)$ when one recursive call returns $t\in D_s\cap\mathbb Z^{r-1}$; reject if all recursive calls reject.

In the narrow case the selected sections contain every lattice point of $C$. In the wide case at least one selected section of $C_0$ contains a lattice point, and every returned point belongs to $C_0\subseteq C$.

The required operations and outer radius bounds are preserved under the sections, integral affine coordinate changes, and homotheties above. If $F(r)$ is the maximum cost at rank $r$, then
$$
  F(r)
  \le
  r^{O(r)}A+r^{O(1)}F(r-1),
$$
because $\phi(r)=r^{O(1)}$. Hence $F(r)=r^{O(r)}A$. The algorithm is deterministic.
\end{proof}

\section{Dual Bottleneck Steiner Value}
\label{appdualebst}

For a fixed Steiner point budget $j$, consider the bottleneck value $\beta_j(P)$ from Section~\ref{subsechybrid}. The parameterized result cited there already supplies the $j^{O(j)}n^{O(1)}$ bound. Full-topology enumeration also gives an $(n+j)^{O(n)}$ algorithm and an implicit optimizer. It uses the splitting and beading framework of~\cite[Section~2]{BrazilRasThomas2010}. In a full topology, the budget becomes $\mathbf 1^Tq\le n+j-1$, while the objective is the bottleneck length.

Fix $n\ge3$, a full Steiner topology $H=(V,E)$, and an integer budget $j\ge0$.
Set
$$
 Q_j=n+j-1,
 \qquad
 \Delta_B=\operatorname{diam}_p(B(P)).
$$
Let $\rho_{H,j}$ be the value of the following semialgebraic program with integer variables. Its value is $+\infty$ when the program is infeasible.
\begin{equation}
\label{eqdualfixedtopology}
\begin{aligned}
 \rho_{H,j}=\min\quad & t \\
 \text{s.t.}\quad
 & q\in\Z_{\ge0}^{E},
 \qquad \mathbf 1^Tq\le Q_j,\\
 & x_s\in B(P) \qquad(s\in S(H)),\\
 & 0\le t\le \Delta_B,\\
 & \mathcal B_p(X_u(x)-X_v(x),tq_e)
 \qquad(e=uv\in E).
\end{aligned}
\end{equation}

\begin{lemma}
\label{lemdualfullequivalence}
For $n\ge3$ and $j\ge0$,
$$
 \beta_j(P)=\min_{H\in\mathcal F_n}\rho_{H,j}.
$$
\end{lemma}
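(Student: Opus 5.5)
We prove the two inequalities separately, following the pattern of Lemma~\ref{lembeadcount} with the edge bound $R$ replaced by the variable bottleneck $t$.

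\emph{Upper bound on the right-hand side.} Fix $H$ and a feasible point $(q,x,t)$ of \eqref{eqdualfixedtopology}. Then every edge with $q_e=0$ has $\norm{X_u(x)-X_v(x)}\le 0$, so its endpoints coincide. Every edge with $q_e>0$ has length at most $tq_e$.

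Contract the zero edges and bead each positive edge into $q_e$ equal segments, each of length at most $t$. The counting identity in the proof of Lemma~\ref{lembeadcount} shows that the resulting tree has $1-n+\mathbf 1^Tq\le 1-n+Q_j=j$ Steiner vertices before identification. Identifying coincident vertices and deleting cycle edges keeps the tree spanning $P$. It does not increase the maximum edge length or the number of Steiner points. Hence $\beta_j(P)\le t$, and therefore $\beta_j(P)\le\rho_{H,j}$ for every $H$.

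\emph{Lower bound.} Take a tree $T$ attaining $\beta_j(P)$; attainment was noted in Section~\ref{subsechybrid}. Its Steiner points may be assumed to lie in $B(P)$. Let $t=\beta_j(P)$. Since $T$ spans $P$ through edges of $T$, and projection onto $B(P)$ keeps all vertices in $B(P)$, every edge has length at most $\Delta_B$. So $0\le t\le\Delta_B$.

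Now normalize $T$ as in the converse part of Lemma~\ref{lembeadcount}:
\begin{itemize}
\item[(i)] identify coincident vertices;
\item[(ii)] delete Steiner leaves (Lemma~\ref{lemnoleaf});
\item[(iii)] straighten maximal degree two Steiner paths (Lemma~\ref{lemstraight}, with $R$ replaced by $t$; the triangle inequality argument is unchanged);
\item[(iv)] split into a full topology $H$ by Lemma~\ref{lemsplitting}.
\end{itemize}
None of these steps increases the bottleneck or the Steiner count. Set $q_e$ to the number of segments on each positive full edge and $q_e=0$ on split edges. Then $\mathcal B_p(X_u-X_v,tq_e)$ holds on every edge, and the count identity gives $\mathbf 1^Tq=n-1+(\#\text{Steiner})\le Q_j$. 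Finally, project the auxiliary coordinates into $B(P)$. This is nonexpansive, fixes the terminals, and preserves zero-length split edges. The result is a feasible point of \eqref{eqdualfixedtopology} with value $t$, so $\rho_{H,j}\le\beta_j(P)$.

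\emph{Expected obstacle and degenerate cases.} The only delicate point is the degenerate case $t=0$. This cannot occur for $n\ge3$ distinct terminals, but the argument needs only that $q_e=0$ forces a zero-length edge, which holds for any $t$. We also need that the minimum over $H$ of $\rho_{H,j}$ is attained, which is immediate from the lower-bound construction.

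Step (iii) uses that paths with endpoints distance at most $(q+1)t$ can be re-subdivided. The count bookkeeping in the lower bound must treat a degree two terminal via Remark~\ref{remdegreetwoterminal}, exactly as in Lemma~\ref{lembeadcount}. I would verify carefully that removing Steiner leaves keeps the count at most $j$ rather than exactly $j$, which is all the budget constraint requires.
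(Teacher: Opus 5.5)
Your proof is correct and follows the paper's argument. In one direction you bead and contract a feasible point of \eqref{eqdualfixedtopology} to get $\beta_j(P)\le\rho_{H,j}$; in the other you normalize an optimal bottleneck tree by leaf deletion, identification, straightening and splitting into a feasible point with $t=\beta_j(P)$, and the only cosmetic difference is that you justify $\beta_j(P)\le\Delta_B$ from all vertices lying in $B(P)$ rather than from a terminal spanning tree, which is equally valid.
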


\begin{proof}
Let $(H,q,x,t)$ be feasible. After zero count contractions and positive edge subdivision, every segment has length at most $t$, and the counting argument of Lemma~\ref{lembeadcount} gives at most
$$
 \mathbf 1^Tq-n+1\le j
$$
Steiner points before identifications. Thus $\beta_j(P)\le\min_H\rho_{H,j}$.

Conversely, take an optimal bottleneck tree with budget $j$ by the leaf deletion, identification, straightening, and splitting steps used in the proof of Lemma~\ref{lembeadcount}. None increases the bottleneck. Let $q_e$ count the path edges represented by a positive full topology edge and set $q_e=0$ on split edges. Then $\mathbf 1^Tq\le n+j-1$, and every full topology edge has length at most $\beta_j(P)q_e$. Since a terminal spanning tree has bottleneck at most $\Delta_B$, we have $\beta_j(P)\le\Delta_B$. Projection into $B(P)$ preserves all edge inequalities, so the program is feasible with $t=\beta_j(P)$. Hence $\min_H\rho_{H,j}\le\beta_j(P)$.
\end{proof}

\begin{lemma}
\label{lemdualfixedcomputation}
For fixed $H\in\mathcal F_n$, the value $\rho_{H,j}$ and a minimizing
witness, when finite, are computable in
$$
 n^{O(n)}\binom{j+3n-4}{2n-3}.
$$
\end{lemma}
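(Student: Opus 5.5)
The bound suggests enumerating every count vector. The number of vectors $q\in\Z_{\ge0}^{E}$ with $\mathbf 1^Tq\le Q_j=n+j-1$ and $|E|=2n-3$ is
$$
 \binom{Q_j+2n-3}{2n-3}=\binom{j+3n-4}{2n-3}.
$$
This is exactly the binomial factor in the statement. The plan is to solve one continuous semialgebraic program per such vector in $n^{O(n)}$ operations and take the minimum.

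For fixed integral $q$, the program \eqref{eqdualfixedtopology} becomes
$$
 \min\{t\mid x_s\in B(P),\ 0\le t\le\Delta_B,\ \mathcal B_p(X_u(x)-X_v(x),tq_e)\ (e=uv\in E)\}.
$$
Expanding every occurrence of $\mathcal B_p$ through its local variables gives a compact basic closed semialgebraic set in $O(n)$ variables. It has $O(n)$ constraints, and their degree depends only on the fixed metric. The set is compact because $x$ lies in a box, $t\in[0,\Delta_B]$, and each local variable is bounded by a fixed power of a bounded quantity. The product $tq_e$ is linear in $t$, because $q_e$ is an integer constant adjoined to the coefficient domain. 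Theorem~\ref{thmsemialgopt} therefore decides emptiness in $n^{O(n)}$ operations. When the set is nonempty, the same theorem minimizes the linear objective $t$ on the projection and returns a lifted minimizer $(x,t)$ within the same bound.

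Two edge cases need a quick check. First, if $q_e=0$ the constraint forces $X_u=X_v$. This is allowed, and it is still a polynomial constraint, so no special handling is required. Second, infeasible vectors simply yield empty sets and contribute $+\infty$. The algorithm keeps the pair with the least $t$ over all enumerated $q$, together with its witness, and reports $+\infty$ if every set is empty. Correctness holds because the integer program is the minimum over its finitely many integral $q$ of these continuous subproblems. The enumeration itself, for example in lexicographic order, costs $O(|E|)$ per vector, so the total cost is $n^{O(n)}\binom{j+3n-4}{2n-3}$.

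The main obstacle is making sure the per-vector cost really is $n^{O(n)}$ uniformly in $q$ and $j$. Under the model of Section~\ref{subseccomputation}, coefficient bit lengths are not charged, so large entries $q_e$ do not affect the operation count. The number of variables, the number of constraints, and the degree are all independent of $q$, so the bound $(s\delta)^{O(N)}$ of Theorem~\ref{thmsemialgopt} with $N,s=O(n)$ and $\delta=O(1)$ gives $n^{O(n)}$. One could instead try to avoid the enumeration by using the threshold and flatness approach of Lemma~\ref{lemlenstrathreshold}. That fails here: the constraint set in $(q,t)$ is not convex, since $tq_e$ is bilinear. This is why direct enumeration is the intended route.
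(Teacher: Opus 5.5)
Your proposal is correct and follows the paper's proof: enumerate the $\binom{j+3n-4}{2n-3}$ integral vectors $q$ with $\mathbf 1^Tq\le Q_j$, solve each fixed-$q$ compact program by Theorem~\ref{thmsemialgopt} in $n^{O(n)}$ operations, and keep the least value with its witness.

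Your closing aside is not needed for the lemma, and its justification is off. The lifted set in $(x,q,t)$ is nonconvex, but its projection to $(q,t)$ is convex. Apart from the linear constraints, $(q,t)\ge0$ is feasible exactly when $tq$ lies in the closed, convex, upward closed set $C_H$ of \eqref{eqcountbody} with $R=1$. Each supporting inequality of that set has the form $t\,w^Tq\ge c$ with $w\ge0$, and each such inequality cuts out a convex region. So the real obstacle to a threshold approach is the continuous objective $t$, not nonconvexity.
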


\begin{proof}
Let $d=|E(H)|=2n-3$. The number of vectors $q\in\Z_{\ge0}^{E}$ satisfying $\mathbf 1^Tq\le Q_j=n+j-1$ is
$$
 \binom{Q_j+d}{d}=\binom{j+3n-4}{2n-3}.
$$
For each fixed $q$, the remaining problem is a compact semialgebraic minimization with $O(n)$ variables and constraints of constant degree, and Theorem~\ref{thmsemialgopt} solves it in $n^{O(n)}$. Enumerating all $q$ and retaining the least feasible value gives the stated bound and a minimizing witness.
\end{proof}

\begin{theorem}
\label{thmdualebstdirect}
For every $j\ge0$, the value $\beta_j(P)$ for the bottleneck
Steiner problem with at most $j$ Steiner points is computable
deterministically in
$$
 (n+j)^{O(n)}.
$$
For $n\ge3$, the full topology algorithm returns a full topology, branch coordinates, and segment counts whose expansion attains $\beta_j(P)$ with at most $j$ Steiner points.
\end{theorem}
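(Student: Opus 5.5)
The argument combines Lemma~\ref{lemdualfullequivalence} with Lemma~\ref{lemdualfixedcomputation} and enumerates full topologies, with the cases $n\le2$ handled separately.

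\emph{Small cases.} For $n=1$ the value is $0$. For $n=2$ the optimal tree is a path. By the triangle inequality and Lemma~\ref{lemstraight}, equal subdivision of the segment $p_1p_2$ into $j+1$ pieces is optimal. Hence $\beta_j(P)=\norm{p_1-p_2}/(j+1)$, which is computable in constant time.

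\emph{General case.} For $n\ge3$, Lemma~\ref{lemdualfullequivalence} gives $\beta_j(P)=\min_{H\in\mathcal F_n}\rho_{H,j}$. The algorithm enumerates the $(2n-5)!!=n^{O(n)}$ full topologies of Lemma~\ref{lemfullcount}. For each $H$ it computes $\rho_{H,j}$ and a witness $(q,x,t)$ by Lemma~\ref{lemdualfixedcomputation}, and it returns the minimum together with the witness attaining it.

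\emph{Running time.} The total cost is
$$
 n^{O(n)}\binom{j+3n-4}{2n-3}.
$$
The main step is to bound this binomial by $(n+j)^{O(n)}$. I would use
$$
 \binom{N}{d}\le N^d \qquad\text{with } N=j+3n-4\le 3(n+j),\ d=2n-3.
$$
This gives $(3(n+j))^{2n-3}=(n+j)^{O(n)}$. Multiplying by $n^{O(n)}\le(n+j)^{O(n)}$ preserves the bound.

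\emph{Expansion attains $\beta_j(P)$.} For the returned witness with $t=\beta_j(P)$, I would check that the expansion is a tree with at most $j$ Steiner points and bottleneck at most $t$. The steps are:
\begin{itemize}
\item contract the edges with $q_e=0$;
\item bead each positive edge into $q_e$ equal segments, so each segment has length $\norm{X_u-X_v}/q_e\le t$;
\item identify coincident vertices and delete cycle edges.
\end{itemize}
A positive edge with zero length is harmless, because its beads coincide and are removed by the identification step. The counting identity in the proof of Lemma~\ref{lembeadcount} gives at most $\mathbf 1^Tq-n+1\le j$ Steiner points. The bottleneck cannot be below $\beta_j(P)$ by the definition of $\beta_j(P)$, so the expansion attains it.

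The only delicate points are edges with $q_e=0$ and the boundary value $t$ in the constraint $\mathcal B_p(\cdot,tq_e)$. When $q_e=0$, this constraint forces the edge to have zero length, which is consistent with contracting it. These points are already covered by the first half of the proof of Lemma~\ref{lemdualfullequivalence}, so I expect no real obstacle beyond the binomial estimate.
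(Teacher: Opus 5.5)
Your proposal is correct and follows the paper's proof essentially step for step. It uses the same $n\le2$ base cases, enumerates $\mathcal F_n$ with Lemma~\ref{lemdualfixedcomputation} applied to each topology, takes correctness from Lemma~\ref{lemdualfullequivalence}, and bounds the Steiner points of the expansion by the counting identity of Lemma~\ref{lembeadcount}. Your estimate $\binom{N}{d}\le N^d$ with $N\le 3(n+j)$ supplies the binomial bound that the paper states without derivation, and your remark that $q_e=0$ forces zero length is what lets the counting identity apply to prescribed counts rather than to edge lengths.
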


\begin{proof}
For $n=1$, the value is 0. For $n=2$, equal subdivision of $p_1p_2$ into $j+1$ segments gives bottleneck $\norm{p_1-p_2}/(j+1)$. Every path between the two terminals with at most $j$ Steiner points has at most $j+1$ edges, so the triangle inequality gives the lower bound.

Assume $n\ge3$. Enumerate $H\in\mathcal F_n$, compute $\rho_{H,j}$ by Lemma~\ref{lemdualfixedcomputation}, and return the least finite value and a witness. Lemma~\ref{lemdualfullequivalence} proves correctness. Lemma~\ref{lemfullcount} gives $n^{\Theta(n)}$ topologies, and
$$
 n^{O(n)}\binom{j+3n-4}{2n-3}=(n+j)^{O(n)}.
$$
The returned witness expands to at most $\mathbf 1^Tq-n+1\le j$ Steiner points. Any further geometric identification and deletion of cycle edges can only reduce this number. The expansion has at most $n+j$ vertices.
\end{proof}

\bibliographystyle{cas-model2-names.bst}
\bibliography{references}

@article{BaeEtAl2011,
  author  = {Sang Won Bae and Sunghee Choi and Chunseok Lee and Shin-ichi Tanigawa},
  title   = {Exact Algorithms for the Bottleneck Steiner Tree Problem},
  journal = {Algorithmica},
  volume  = {61},
  number  = {4},
  pages   = {924--948},
  year    = {2011},
  doi     = {10.1007/s00453-011-9553-y}
}

@inproceedings{BandyapadhyayEtAl2024,
  author    = {Sayan Bandyapadhyay and William Lochet and Daniel Lokshtanov and Saket Saurabh and Jie Xue},
  title     = {Euclidean Bottleneck Steiner Tree is Fixed-Parameter Tractable},
  booktitle = {Proceedings of the 2024 Annual {ACM-SIAM} Symposium on Discrete Algorithms},
  pages     = {699--711},
  year      = {2024},
  doi       = {10.1137/1.9781611977912.27}
}

@book{BasuPollackRoy2006,
  author    = {Saugata Basu and Richard Pollack and Marie-Fran{\c c}oise Roy},
  title     = {Algorithms in Real Algebraic Geometry},
  series    = {Algorithms and Computation in Mathematics},
  volume    = {10},
  edition   = {2},
  publisher = {Springer},
  address   = {Berlin, Heidelberg},
  year      = {2006},
  doi       = {10.1007/3-540-33099-2}
}

@article{BrazilRasThomas2010,
  author  = {Marcus Brazil and Charl J. Ras and Doreen A. Thomas},
  title   = {Approximating Minimum Steiner Point Trees in Minkowski Planes},
  journal = {Networks},
  volume  = {56},
  number  = {4},
  pages   = {244--254},
  year    = {2010},
  doi     = {10.1002/net.20376}
}

@article{BrazilEtAl2011Generalised,
  author  = {Marcus N. Brazil and Charl J. Ras and Konrad J. Swanepoel and Doreen A. Thomas},
  title   = {Generalised {$k$}-Steiner Tree Problems in Normed Planes},
  journal = {Algorithmica},
  volume  = {71},
  number  = {1},
  pages   = {66--86},
  year    = {2015},
  doi     = {10.1007/s00453-013-9780-5}
}

@article{CalinescuWang2023,
  author  = {Gruia C{\u a}linescu and Xiaolang Wang},
  title   = {Combination Algorithms for Steiner Tree Variants},
  journal = {Algorithmica},
  volume  = {85},
  pages   = {153--169},
  year    = {2023},
  doi     = {10.1007/s00453-022-01009-8}
}

@article{ChenEtAl2001TCS,
  author  = {Donghui Chen and Ding-Zhu Du and Xiao-Dong Hu and Guo-Hui Lin and Lusheng Wang and Guoliang Xue},
  title   = {Approximations for Steiner Trees with Minimum Number of Steiner Points},
  journal = {Theoretical Computer Science},
  volume  = {262},
  number  = {1--2},
  pages   = {83--99},
  year    = {2001},
  doi     = {10.1016/S0304-3975(00)00182-1}
}

@article{ChengEtAl2008,
  author  = {Xiuzhen Cheng and Ding-Zhu Du and Lusheng Wang and Baogang Xu},
  title   = {Relay Sensor Placement in Wireless Sensor Networks},
  journal = {Wireless Networks},
  volume  = {14},
  number  = {3},
  pages   = {347--355},
  year    = {2008},
  doi     = {10.1007/s11276-006-0724-8}
}

@article{CohenNutov2018,
  author  = {Nachshon Cohen and Zeev Nutov},
  title   = {Approximating Steiner Trees and Forests with Minimum Number of Steiner Points},
  journal = {Journal of Computer and System Sciences},
  volume  = {98},
  pages   = {53--64},
  year    = {2018},
  doi     = {10.1016/j.jcss.2018.08.001}
}

@article{LinXue1999,
  author  = {Guang-Hong Lin and Guoliang Xue},
  title   = {Steiner Tree Problem with Minimum Number of Steiner Points and Bounded Edge-Length},
  journal = {Information Processing Letters},
  volume  = {69},
  number  = {2},
  pages   = {53--57},
  year    = {1999},
  doi     = {10.1016/S0020-0190(98)00201-4}
}

@article{MandoiuZelikovsky2000,
  author  = {Ion I. M{\u a}ndoiu and Alexander Z. Zelikovsky},
  title   = {A Note on the {MST} Heuristic for Bounded Edge-Length Steiner Trees with Minimum Number of Steiner Points},
  journal = {Information Processing Letters},
  volume  = {75},
  number  = {4},
  pages   = {165--167},
  year    = {2000},
  doi     = {10.1016/S0020-0190(00)00095-8}
}

@article{NutovYaroshevitch2009,
  author  = {Zeev Nutov and Ariel Yaroshevitch},
  title   = {Wireless Network Design via 3-Decompositions},
  journal = {Information Processing Letters},
  volume  = {109},
  number  = {19},
  pages   = {1136--1140},
  year    = {2009},
  doi     = {10.1016/j.ipl.2009.07.013}
}

@article{ShinChoi2023,
  author  = {Donghoon Shin and Sunghee Choi},
  title   = {An Efficient 3-Approximation Algorithm for the Steiner Tree Problem with the Minimum Number of Steiner Points and Bounded Edge Length},
  journal = {PLOS ONE},
  volume  = {18},
  number  = {11},
  pages   = {e0294353},
  year    = {2023},
  doi     = {10.1371/journal.pone.0294353}
}

@misc{DadushVempala2012,
  author        = {Daniel Dadush and Santosh S. Vempala},
  title         = {Near-Optimal Deterministic Algorithms for Volume Computation and Lattice Problems via M-Ellipsoids},
  year          = {2012},
  eprint        = {1201.5972},
  archivePrefix = {arXiv},
  primaryClass  = {cs.DS}
}

@article{Lenstra1983,
  author  = {Lenstra, Jr., Hendrik W.},
  title   = {Integer Programming with a Fixed Number of Variables},
  journal = {Mathematics of Operations Research},
  volume  = {8},
  number  = {4},
  pages   = {538--548},
  year    = {1983},
  doi     = {10.1287/moor.8.4.538}
}

@misc{DadushPeikertVempala2011,
  author        = {Dadush, Daniel and Peikert, Chris and Vempala, Santosh},
  title         = {Enumerative Lattice Algorithms in Any Norm via {M}-Ellipsoid Coverings},
  year          = {2011},
  note          = {Proceedings of the 52nd Annual IEEE Symposium on Foundations of Computer Science},
  eprint        = {1011.5666},
  archivePrefix = {arXiv},
  primaryClass  = {cs.DS}
}

@article{BanaszczykEtAl1999,
  author  = {Banaszczyk, Wojciech and Litvak, Alexander E. and Pajor, Alain and Szarek, Stanis{\l}aw J.},
  title   = {The Flatness Theorem for Nonsymmetric Convex Bodies via the Local Theory of Banach Spaces},
  journal = {Mathematics of Operations Research},
  volume  = {24},
  number  = {3},
  pages   = {728--750},
  year    = {1999},
  doi     = {10.1287/moor.24.3.728}
}

@article{Rudelson2000,
  author  = {Rudelson, Mark},
  title   = {Distances between Non-Symmetric Convex Bodies and the {$MM^*$}-Estimate},
  journal = {Positivity},
  volume  = {4},
  number  = {2},
  pages   = {161--178},
  year    = {2000},
  doi     = {10.1023/A:1009842406728}
}

\end{document}